\newcommand{\ketbra}[2]{|#1\rangle\langle#2|}
\newcolumntype{C}{>{\centering\arraybackslash}p{0.27\textwidth}}
\newcommand{\dims}{d}
\newcommand{\img}{\text{i}}
\newcommand{\povmset}{{\mathfrak{M}}}
\newcommand{\nqubits}{{N}}
\newcommand{\pauliX}{{X}}
\newcommand{\pauliY}{{Y}}
\newcommand{\pauliZ}{{Z}}
\newcommand{\pauliObsSet}{{\mathcal{P}}}
\newcommand{\opnorm}[1]{{\left\|#1\right\|}_{\text{op}}}
\newcommand{\tracenorm}[1]{{\left\|#1\right\|}_{1}}
\newcommand{\hsnorm}[1]{{\left\|#1\right\|}_{\text{HS}}}
\newcommand{\barDelta}{{\overline{\Delta}}}
\newcommand{\ptb}{{z}}
\newcommand{\ptbDistr}{{\mathcal{D}_{\ell,\cd}}}
\newcommand{\cd}{{c}}
\newcommand{\out}{{x}}
\def\multiset#1#2{\ensuremath{\left(\kern-.3em\left(\genfrac{}{}{0pt}{}{#1}{#2}\right)\kern-.3em\right)}}
\newcommand{\ham}[2]{\operatorname{d}_{\rm Ham}(#1,#2)}
\newcommand{\qmm}{{\rho_{\text{mm}}}}
\newcommand{\zest}{{\hat{z}}}
\newcommand{\HH}{\mathbb{H}}
\newcommand{\Herm}[1]{{\HH_{#1}}}
\newcommand{\qbit}[1]{|#1\rangle}
\newcommand{\qadjoint}[1]{\langle#1|}
\newcommand{\qproj}[1]{\qbit{#1}\qadjoint{#1}}
\newcommand{\qoutprod}[2]{\qbit{#1}\qadjoint{#2}}
\newcommand{\hdotprod}[2]{\left\langle#1,#2\right\rangle}
\newcommand{\matdotprod}[3]{\langle#1|#2|#3\rangle}
\newcommand{\eye}{\mathbb{I}}
\newcommand{\VecOp}{\text{vec}}
\newcommand{\vvec}[1]{|#1\rangle\rangle}
\newcommand{\vadj}[1]{\langle\langle#1|}
\newcommand{\vvdotprod}[2]{\langle\langle#1\mid#2\rangle\rangle}
\newcommand{\Luders}{\mathcal{H}}
\newcommand{\Choi}{{\mathcal{C}}}
\newcommand{\hbasis}{{\mathcal{V}}}
\newcommand{\qest}{{\hat{\rho}}}
\newcommand{\POVM}{\mathcal{M}}
\title{Pauli Measurements Are Near-Optimal for Single-Qubit Tomography}
\author{
    \begin{tabular}[t]{C@{\extracolsep{6.5em}} C}
   Jayadev Acharya &Abhilash Dharmavarapu \\
 Cornell University & Cornell University\\ 
\small \texttt{acharya@cornell.edu} &\small \texttt{ad2255@cornell.edu} 
\end{tabular}
\vspace{2ex}\\
\begin{tabular}[t]{C@{\extracolsep{6.5em}} C}
    Yuhan Liu & Nengkun Yu \\
Rice University & Stony Brook University\\ 
\small \texttt{yuhan-liu@rice.edu} &\small \texttt{nengkun.yu@cs.stonybrook.edu} 
\end{tabular}
}
\begin{document}

\maketitle

\begin{abstract}
 We provide the first non-trivial lower bounds for single-qubit tomography algorithms and show that at least ${\Omega}\left(\frac{10^N}{\sqrt{N} \varepsilon^2}\right)$ copies are required to learn an $N$-qubit state  $\rho\in\mathbb{C}^{d\times d},d=2^N$ to within $\eps$ trace distance. Pauli measurements, the most commonly used single-qubit measurement scheme, have recently been shown to require at most $O\left(\frac{10^N}{\varepsilon^2}\right)$ copies for this problem. Combining these results, we nearly settle the long-standing question of the complexity of single-qubit tomography.

%Quantum state tomography is a fundamental problem in both physics and quantum computing. The goal is to estimate an unknown state $N$-qubit state $\rho$  up to $\varepsilon$-error in trace distance using $\ns$ copies of $\rho$.  We prove that any adaptively chosen single-qubit measurements must require $n=\Omega(\frac{10^N}{\sqrt{N}\eps^2})$ copies. Since $n=O(\frac{10^N}{\varepsilon^2})$ is sufficient using Pauli measurements, we thus close the long-standing gap up to a mild $\sqrt{N}$ factor and show that Pauli measurement is optimal among single-qubit measurements.

    % Our key discovery is that the hard instance for s single-qubit measurements should only contain roughly $\dims^{1.9}=o(\dims^2)$ degree of freedom, much less than the degree of quantum states $\dims^2$. 
    % To our knowledge, all prior works on tomography, including [ADLY, STOC'25], used constructions with $\Theta(\dims^2)$ degree of freedom for full-rank states, and thus their methods cannot obtain the tight bound. This insight could be useful for other quantum learning problems under measurement constraints.

\end{abstract}

\section{Introduction}

Quantum tomography is a fundamental problem in quantum information science~\cite[p.42]{Nielsen:2011:QCQ:1972505}. Given $\ns$ copies of an $\nqubits$-qubit state $\rho$, the objective is to perform quantum measurements and output an estimate $\hat \rho$ such that the trace distance between $\rho$ and $\hat \rho$ is at most $\eps$ with high probability. 

Entangled measurements that operate directly on $\rho^{\otimes\ns}$ are the most powerful, and it is now well established that their optimal copy complexity scales as $\Theta(4^\nqubits/\eps^2)$~\cite{HaahHJWY17,ODonnellW16}. While optimal copy complexity is achieved, entangled measurements are simply impractical even for moderate values of $\nqubits$. A more practical approach to problems in quantum statistical inference is via unentangled measurements, where quantum measurements are performed only on one copy of $\rho$ at a time. It has been established that the optimal copy complexity for single-copy quantum state tomography with unentangled measurements is $\Theta(8^\nqubits/\eps^2)$~\cite{KRT14, HaahHJWY17,chen2023does}. Although much simpler than entangled measurements, implementing single-copy measurements remains highly challenging. In comparison, single-qubit measurements, a subclass of single-copy measurements, are significantly easier to implement~\cite{Tabia2012, BianLQZZRSBX2015, BentQTSL2015, ChenBJZG2007, GarciaMRSTBPMTM2021}. Among single-qubit schemes, Pauli measurements are arguably among the most experiment-friendly. Pauli measurements involve measuring each qubit in the eigenbasis of one of the three $2\times 2$ Pauli operators $\pauliX,\pauliY,\pauliZ$.
 \cite{guctua2020fast} proved an upper bound $O(\nqubits\cdot 12^\nqubits/\eps^2)$. Later,
an improved upper bound of $O(10^\nqubits/\eps^2)$ was established in~\cite{Yu2020Pauli,de_Gois_2024,ADLY2025Paulinot} by observing that high-weight Pauli matrices play a more crucial role in state tomography. A lower bound of $\Omega(9.118^\nqubits/\eps^2)$ was established for Pauli measurements in~\cite{ADLY2025Paulinot}. Despite extensive study, the following question remains open.
\begin{center}
Is $\mathbf{10^\nqubits}$ the copy complexity of quantum state tomography using Pauli measurements?
\end{center}
If so, this differs significantly from entangled and single-copy measurements, where complexities scale as \( d^k \) for some integer \( k=2,3 \) with $d=2^\nqubits$ as $\log_2 10= 3.32192809489\ldots$.

Moreover, even if true, proving a matching lower bound must overcome two challenges:  1. Handling correlated measurement outcomes due to the correlation between Pauli measurements;
2. Constructing a hardness case. Notably, many existing hardness constructions rely on Haar-random unitaries \cite{HaahHJWY17,BubeckC020}. Applying this approach would treat all Pauli matrices equally, whereas high-weight Pauli matrices play a more crucial role in Pauli tomography.

A natural and more challenging question is whether general single-qubit measurements can provide an advantage over Pauli measurements. In particular, 

\begin{center}
What is the complexity of \textbf{general single-qubit} measurement schemes?
\end{center}

The flexibility in choosing single-qubit measurements introduces numerous parameters and potentially highly correlated distributions. To the best of our knowledge, no lower-bounds exist for single-qubit measurements beyond what follows from single-copy measurements.

Our recent work~\cite{ADLY2025Paulinot} focuses exclusively on the copy complexity of Pauli tomography. In particular, their (non-tight) lower bounds are shown specifically for Pauli measurements. We not only generalize their work to consider all single-qubit measurements, but also establish much stronger bounds.

\subsection{Our results}

We resolve both the questions above by establishing a lower bound for any single-qubit measurement scheme that matches the upper bound for Pauli measurement schemes up to a poly-logarithmic factor ($\sqrt{\log(2^\nqubits)}=\sqrt{N}$). Moreover, the lower bound also holds for adaptively chosen single-qubit measurements, thereby establishing that adaptivity cannot significantly improve the performance. 

\begin{theorem}
\label{thm:main-result}
 Any adaptively chosen single-qubit measurement scheme for quantum tomography that estimates an $\nqubits$-qubit state $\rho$ up to trace distance of $\eps$ with probability at least 0.9 requires at least
    \[
    \ns=\bigOmega{\frac{10^\nqubits}{\sqrt{\nqubits}\eps^2}}
    \]
    copies of $\rho$.
\end{theorem}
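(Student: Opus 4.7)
I plan to prove the lower bound by constructing a hard family of states supported on Pauli strings of a specific ``critical'' weight and applying Fano's inequality to a Gilbert-Varshamov packing. The choice of weight is dictated by the identity $10^N = \sum_{w=0}^N \binom{N}{w} 9^w$, whose dominant term sits at $w^{\star} = \lfloor 9N/10 \rfloor$, where $\binom{N}{w^{\star}} 9^{w^{\star}} = \Theta(10^N/\sqrt{N})$ by Stirling's approximation --- this is the source of the $\sqrt{N}$ slack in the final bound. Let $T$ denote the set of Pauli strings of weight exactly $w^{\star}$, so $|T| = \binom{N}{w^{\star}} 3^{w^{\star}}$, and consider
\[
\rho_Z = \frac{I}{d} + \frac{\eta}{d}\sum_{P \in T} Z_P P, \qquad Z \in \{-1,+1\}^T,
\]
with $Z$ drawn uniformly from a balanced binary code $\mathcal{Z} \subset \{-1,+1\}^T$ of size $2^{\Omega(|T|)}$ and minimum Hamming distance $\Omega(|T|)$ satisfying $\EE_Z[Z_P Z_Q]=\delta_{P,Q}$ (e.g., a random linear code). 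The scale $\eta$ is taken as small as possible subject to (i) $\rho_Z \succeq 0$, which by non-commutative Khintchine requires only $\eta = O(1/\sqrt{|T|N})$, and (ii) a pairwise trace-distance bound $\|\rho_Z - \rho_{Z'}\|_1 \ge \Omega(\eta\sqrt{|T|})$ for all distinct $Z,Z' \in \mathcal{Z}$.

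The technical heart of the argument is a per-copy bound on the chi-squared divergence that leverages the tensor product structure of single-qubit POVMs. Any such POVM element can be written $E_x = \bigotimes_{i=1}^{N} E_{i,x_i}$ with $E_{i,x_i} = \lambda_{i,x_i}(I + \mu_{i,x_i}\hat{v}_{i,x_i}\cdot\vsigma)$, where $|\mu_{i,x_i}| \le 1$ by positivity and completeness yields $\sum_{x_i}\lambda_{i,x_i}=1$ and hence $\sum_{x_i}\lambda_{i,x_i}\mu_{i,x_i}^2 \le 1$. Using $\EE_Z[Z_P Z_Q]=\delta_{P,Q}$ together with $\sum_{a\in\{X,Y,Z\}}\hat{v}_{i,x_i,a}^2 = 1$, the expected squared outcome-probability difference summed over $P \in T$ factorizes qubit-by-qubit into the elementary symmetric polynomial $e_{w^{\star}}(\mu_{1,x_1}^2,\ldots,\mu_{N,x_N}^2)$. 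The generating-function identity $e_{w^{\star}}(\mu^2) = [y^{w^{\star}}]\prod_i(1+y\mu_i^2)$ combined with the completeness bound above then collapses the sum over outcomes $x$ to
\[
\EE_Z\!\left[\chi^2\!\left(P^Z \,\big\|\, P^{I/d}\right)\right] \le \eta^2 \,[y^{w^{\star}}](1+y)^N = \eta^2\binom{N}{w^{\star}},
\]
uniformly over all single-qubit POVMs. Because the bound depends only on the single-copy POVM structure and not on any measurement history, the chain rule for mutual information delivers $I(Z; X_{1:m}) \le m\eta^2\binom{N}{w^{\star}}$ for any adaptive scheme.

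A learner that succeeds with probability $0.9$ must recover $Z$ from $\mathcal{Z}$ (since distinct codewords are separated by more than $2\eps$ in trace distance), so Fano gives $I(Z; X_{1:m}) \ge \Omega(\log|\mathcal{Z}|) = \Omega(|T|)$. Combining with the per-copy upper bound,
\[
m = \Omega\!\left(\frac{|T|}{\eta^2\binom{N}{w^{\star}}}\right) = \Omega\!\left(\frac{3^{w^{\star}}}{\eta^2}\right),
\]
and plugging in $\eta = \Theta(\eps/\sqrt{|T|})$ with $w^{\star} = \lfloor 9N/10 \rfloor$ produces $m = \Omega(\binom{N}{w^{\star}}\, 9^{w^{\star}}/\eps^2) = \Omega(10^N/(\sqrt{N}\eps^2))$ as claimed.

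The hardest step, I expect, is the trace-distance estimate (ii). The naive Frobenius bound gives only $\|\rho_Z-\rho_{Z'}\|_1 \ge \|\rho_Z-\rho_{Z'}\|_F \sim \eta\sqrt{|T|/d}$, which forces $\eta = \Omega(\eps\sqrt{d/|T|})$ and degrades the final exponent from $10^N$ to $5^N$. To recover the full $10^N$ I need to show that the spectrum of the random Pauli sum $M = \sum_P (Z_P-Z'_P) P$ is sufficiently delocalized so that its trace norm is of order $d\sqrt{|T|}$ rather than $\sqrt{d|T|}$. This will likely require higher-moment estimates $\EE\,\mathrm{tr}(M^{2k})$ combined with an anti-concentration argument on the singular-value distribution, tailored to the commutation structure of weight-$w^{\star}$ Paulis. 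A secondary obstacle is keeping the per-copy chi-squared calculation sharp enough to expose the $\binom{N}{w^{\star}}$ factor exactly via the generating-function identity, rather than any looser combinatorial surrogate.
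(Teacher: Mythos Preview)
Your core construction and per-copy information bound are essentially the paper's: perturb $\qmm$ along Pauli strings of weight $\approx 9N/10$, and exploit the tensor structure of single-qubit POVMs (your elementary-symmetric-polynomial / generating-function identity is the same computation as the paper's bound $\sum_i \vadj{V_i}\Choi_\POVM\vvec{V_i}\le\sum_{m\ge w}\binom{N}{m}$). The packaging differs: the paper uses uniform $z\in\{-1,1\}^\ell$ together with an Assouad-style coordinatewise MI bound, rather than a Gilbert--Varshamov code with Fano. The uniform-$z$ route is cleaner here because (a) $\expect{z_Pz_Q}=\delta_{PQ}$ holds exactly without engineering the code, and (b) you only need the operator-norm concentration on a \emph{single} random event, not uniformly over $2^{\Omega(|T|)}$ codewords.

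The step you flag as hardest --- the trace-norm lower bound $\|\rho_Z-\rho_{Z'}\|_1\ge\Omega(\eta\sqrt{|T|})$ --- has a much simpler resolution than the higher-moment / anti-concentration route you propose. The paper simply reuses the operator-norm concentration as a H\"older dual witness: once $\opnorm{W_z}\le C\sqrt{\ell/d}$ is established (for the positivity of $\sigma_z$), take $B=W_z/\opnorm{W_z}$ and compute $\tracenorm{\sigma_z-\sigma_{\hat z}}\ge |\Tr[B^\dagger(\sigma_z-\sigma_{\hat z})]|\gtrsim \frac{\eps}{\ell}\ham{z}{\hat z}$ via orthonormality of the $V_i$. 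No spectral delocalization of the difference matrix is needed; the witness is the \emph{sum} $W_z$, not $W_z-W_{\hat z}$. One further point: the paper replaces non-commutative Khintchine by the sharper concentration of Bandeira--Boedihardjo--van Handel, which removes the $\sqrt{\log d}$ factor and thus lifts the implicit restriction $\eps=O(1/\sqrt{N})$ that your positivity constraint $\eta=O(1/\sqrt{|T|N})$ would otherwise impose.
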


Combining with the upper bounds for Pauli measurements, the following corollary is immediate.
\begin{corollary}
Pauli measurements are near-optimal for single-qubit quantum tomography.
\end{corollary}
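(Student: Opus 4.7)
My plan is to prove the lower bound via an Assouad-type multiple hypothesis argument, built around a hard family of $\nqubits$-qubit states indexed by sign patterns over a carefully chosen subfamily of Pauli strings. I would work with states of the form
\[
\qs_z = \frac{I}{d} + \sum_{P \in \setS} z_P\, \eta_P\, \frac{P}{d}, \qquad z \in \{-1,+1\}^{\setS},
\]
where $\setS$ is a subset of non-identity Pauli strings and $\eta_P > 0$ are tunable perturbation amplitudes. To pick up the $10^\nqubits/\sqrt{\nqubits}$ scaling, I would take $\setS$ to be concentrated on Pauli weights near the maximizer of $\binom{\nqubits}{k}\, 9^k$, so that by binomial concentration this slab contains $\Theta(10^\nqubits/\sqrt{\nqubits})$ Paulis; this choice is what naturally encodes the fact, already exploited on the upper bound side, that high-weight Paulis dominate. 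The $\eta_P$ would be tuned so that (i) $\qs_z$ remains PSD for all $z$ and (ii) typical pairs $\qs_z, \qs_{z'}$ are separated by trace distance $\Theta(\eps)$, which via an appropriate norm inequality for random Pauli sums reduces trace-distance recovery to Hamming recovery of $z$.

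The main technical step is to upper-bound, for any single-qubit product POVM $\POVM = \POVM_1 \otimes \cdots \otimes \POVM_\nqubits$ applied to one copy, the per-sign chi-squared information about a single coordinate $z_P$. Because the perturbation is a tensor-product Pauli $P = P_1 \otimes \cdots \otimes P_\nqubits$, this quantity factorizes across qubits, reducing to controlling products of the form $\prod_i \text{tr}(E_i P_i)^2/\text{tr}(E_i)$ per POVM outcome $E = E_1 \otimes \cdots \otimes E_\nqubits$. The key inequality here is the single-qubit Bloch-ball bound: for every PSD $2\times 2$ matrix $E_i$, $\sum_{Q \in \{\pauliX,\pauliY,\pauliZ\}} \text{tr}(E_i Q)^2 \leq \text{tr}(E_i)^2$, encoding that the Bloch vector of any measurement effect has length at most its trace. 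Summing per-outcome over all Paulis in $\setS$ and iterating the Bloch-ball bound across the $\nqubits$ qubits, the aggregate per-copy chi-squared across all signs is bounded by $O(\eps^2)$, so that Pinsker combined with Assouad's inequality forces $n = \bigOmega{|\setS|/\eps^2} = \bigOmega{10^\nqubits/(\sqrt{\nqubits}\,\eps^2)}$.

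Adaptivity is handled via the standard chain-rule decomposition: the mutual information between the algorithm's transcript and $z$ equals the sum over copies of per-round conditional mutual informations, and the single-copy bound above is uniform over the prior transcript since each copy is measured by a product POVM regardless of history. The main obstacle I expect is pinning down the exponents precisely --- in particular, realizing the full $10^\nqubits = (1+9)^\nqubits$ rather than a smaller base; this requires carefully matching, qubit by qubit, the identity-versus-non-identity contributions from $\eta_P$ and from the Bloch-ball inequality, and it is exactly the place where the Pauli-only lower bound of $\Omega(9.118^\nqubits/\eps^2)$ in \cite{ADLY2025Paulinot} falls short. Secondary technical points include showing PSD-ness of $\qs_z$ uniformly in $z$ (which caps $\sum_P \eta_P$ and fixes the trace-distance separation to $\Theta(\eps)$), and establishing the sharp trace-norm lower bound on $\sum_P z_P \eta_P P$ needed to convert Hamming recovery into trace-distance recovery, likely via a non-commutative Khintchine-style estimate on random Pauli sums.
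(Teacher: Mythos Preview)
Your overall strategy matches the paper's: an Assouad/mutual-information argument on a hard family indexed by signs over high-weight Paulis (weight near $9\nqubits/10$, the maximizer of $\binom{\nqubits}{k}9^k$), with the single-qubit Bloch-ball inequality $\sum_{Q\in\{X,Y,Z\}}\Tr(E_iQ)^2\le\Tr(E_i)^2$ as the key per-qubit bound, and a sharp matrix-concentration estimate (the paper uses the Bandeira--Boedihardjo--van Handel free-probability bound rather than standard Bernstein or Khintchine) to certify PSD-ness and trace-distance separation.

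However, your accounting has a genuine error that, if followed as written, yields the wrong exponent. You assert that the weight-$k$ slab contains $\Theta(10^\nqubits/\sqrt{\nqubits})$ Paulis; this is impossible, since there are only $4^\nqubits$ Pauli strings in total. The correct count is $|\setS|=\binom{\nqubits}{k}3^k$, which for $k\approx 9\nqubits/10$ is about $\dims^{1.9}$. Consequently the conclusion ``$n=\Omega(|\setS|/\eps^2)$'' as stated gives only $\Omega(\dims^{1.9}/\eps^2)$, far short of $10^\nqubits$. The $10^\nqubits$ does \emph{not} come from $|\setS|$; it comes from a second factor of $3^k$ produced by the Bloch-ball bound. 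Concretely, with $\eta\asymp\eps/\sqrt{|\setS|}$ (fixed by the trace-distance requirement and matrix concentration), the aggregate per-copy chi-squared over all $P\in\setS$ is
\[
\sum_{P\in\setS}\frac{\eta^2}{\dims}\prod_{i=1}^{\nqubits}\sum_{E_i}\frac{\Tr(E_iP_i)^2}{\Tr(E_i)}
\;\le\;\frac{\eta^2}{\dims}\binom{\nqubits}{k}\,2^{\nqubits}
\;=\;\eta^2\binom{\nqubits}{k}
\;=\;\Theta\!\left(\frac{\eps^2}{3^{k}}\right),
\]
not merely $O(\eps^2)$. The Bloch-ball inequality collapses the sum over $\{X,Y,Z\}$ at each of the $k$ non-identity positions from three terms to one, buying a factor of $3^k$. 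Dividing by $|\setS|$ gives an average per-sign information of order $\eps^2/(\binom{\nqubits}{k}9^k)$, and \emph{this} is what forces $n=\Omega(\binom{\nqubits}{k}9^k/\eps^2)=\Omega(10^\nqubits/(\sqrt{\nqubits}\eps^2))$. In the paper's language, the lower bound is $\ell^2$ divided by the MIC quantity $\sum_i\vadj{V_i}\Choi_\POVM\vvec{V_i}\le\binom{\nqubits}{k}$, and it is the ratio $\ell^2/\binom{\nqubits}{k}=\binom{\nqubits}{k}9^k$ that equals $\Theta(10^\nqubits/\sqrt{\nqubits})$---not $\ell$ itself. Once you fix this bookkeeping, your plan coincides with the paper's proof.
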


\subsection{Our techniques}
We give an overview of our main novelties and elaborate on them in later sections.
\paragraph{Less is more.} Lower bound proofs usually involve constructing a set of instances (which are quantum states in our problem) and a decision problem that is hard for all learning algorithms. In nearly all prior works on statistical parametric estimation, both quantum and classical, the hard instances are constructed from a subspace with roughly the same dimension as the ambient dimensionality of the parameter space: $\Theta(\dims^2)= 4^\nqubits$ for full-rank states~\cite{HaahHJWY17,ODonnellW17,BubeckC020, chen2023does, ADLY2025Paulinot}, $\Theta(\dims)$ for $\dims$-dimensional classical distributions such as discrete distribution, \cite{Paninski08, duchi2013local, ACLST22iiuic}, Gaussian \cite{acharya2020distributed} and product Bernoulli distributions~\cite{Canonne2020testing}. It seems like no other choice is reasonable, as we want to take full advantage of the complexity of the problem in our hard case.

The most surprising insight that leads to the near-optimal lower bounds almost seems sacrilegious to this belief. For some $\beta<1$,
\begin{center}
    \fbox{We only need $O(4^{\nqubits\beta})$ degrees of freedom in the hard construction for single-qubit measurements.}
\end{center}
 This is strictly sublinear in the dimensionality of the ambient space, which is $\Theta(4^\nqubits)$. We will elaborate on how we came to this conclusion in \cref{sec:why10}.

\paragraph{Handling arbitrary single-qubit measurements.} Although being an important step, the argument in~\cite{ADLY2025Paulinot} was specifically tailored to Pauli measurements. 
The lower bound construction perturbs the maximally mixed state along different Pauli observables, which happen to be the eigenvectors of the post-measurement channel/measurement information channel~\cite{liu2024restricted} of Pauli measurements. This seems like a happy coincidence for Pauli measurements.
 
 Moreover, the cardinality of Pauli measurements is only $3^{\nqubits}$, much smaller than the dimension of full-rank states $4^\nqubits$. It is not clear whether their method can extend to the set of single-qubit measurements, which is uncountably infinite. 

 We generalize the techniques in \cite{ADLY2025Paulinot} to single-qubit measurements by establishing upper bounds on the the mutual information of the \emph{measurement information channels} for any single-qubit measurement scheme.
 It turns out that the mutual information upper bound can be achieved with Pauli measurements. 
 An intuitive description is in \cref{sec-mic-single-qubit-overview} and a detailed derivation is in \cref{sec:MI-single-qubit}.

 \subsection{Related works}

In a concurrent work, Keenan, Goold, and Nico-Katz used Gaussian unitary ensembles to design a non-adaptive algorithm based on Pauli measurements, recovering the $10^{\nqubits}/\varepsilon^2$ upper bound in \cite{keenan2025randommatrixtheorypauli}. Their results suggest that this bound is the best achievable for their method in the absence of a rigorous proof for general non-adaptive measurements.

Quantum tomography has also been studied under low rank assumptions. In particular,~\cite{HaahHJWY17, ODonnellW16} showed that the optimal copy complexity of estimating a $d$-dimensional (note $d=2^N$) quantum state with rank $r$ is $\Theta(dr/\eps^2)$ with entangled measurements. With non-adaptive single-copy measurements, the copy complexity is known to be $\Theta(dr^2/\eps^2)$. Interpolating between fully-entangled and single-copy measurements,~\cite{Chen0L24memory} established tight bounds as a function of the number of copies that we are allowed to make an entangled measurement on. 

% For Pauli tomography on $\nqubits$-states an upper bound of $\tilde{O}(N\times 12^N/\eps^2)$ was established in~\cite{guctua2020fast}.~\cite{Yu2020Pauli, de_Gois_2024, ADLY2025Paulinot} showed an upper bound of $O(10^\nqubits/\eps^2)$, and~\cite{ADLY2025Paulinot} established a lower bound of $\Omega(9.118^\nqubits/\eps^2)$

Quantum tomography has also been studied under other distance measures such as fidelity~\cite{HaahHJWY17, Chen0HL22, Yuen_2023}, and quantum relative entropy~\cite{flamian2023tomography}.

Pauli observables, a special class of Pauli measurements, have also been studied.~\cite{GrossLFBE10, Flammia_2012} established the sample complexity of rank-$r$ tomography under non-adaptive measurements, and~\cite{cai2016optimal} obtained near-optimal error rates under Hilbert Schmidt and operator norm induced distance. 

Quantum state certification is the related problem of hypothesis testing where the goal is to test whether $\rho$ is equal to a reference state. The problem has been studied with entangled measurements in~\cite{ODonnellW15, BadescuO019}, with single copy measurements in~\cite{BubeckC020, Chen0HL22, LiuA24}, and with Pauli measurements~\cite{Yu2023almost}. 

Quantum shadow tomography and quantum overlapping tomography study the situation where we are interested in obtaining some partial information about the quantum state. \cite{Cotler_2020, Garc_a_P_rez_2020, evans2019scalable} consider the problem of obtaining simultaneously all the $k$-qubit reductions of an $\nqubits$-qubit state. \cite{Aaronson20, Huang_2020, ChenCH021,yu2023learningmarginalssuffices} study shadow tomography to estimate expectations of observables and~\cite{chen2024pauli} study Pauli shadow tomography with constrained measurements. 

The setting of single-copy quantum inference is similar to distributed statistical inference, where the goal is to solve a statistical task when the samples are distributed across users who can only send an information-constrained (e.g., communication, or privacy constraints) message about their sample. These problems have been extensively studied in the past decade, and we draw from some of the lower bound methods developed in these works and generalize them to the quantum framework~\cite{duchi2013local, barnes2019lower, AcharyaCT19, ACT:19:IT2, AcharyaCST23}.

\paragraph{Organization.} The rest of the paper is organized as follows. In \cref{sec:why10} we motivate the appearance of 10 in the copy complexity of Pauli tomography. In \cref{sec:overview} we provide an overview of our lower bound. \cref{sec:preliminaries} provides the formal problem setup, describes quantum measurements, and discusses various probability divergences. \cref{sec:hard-case} describes the construction of the hard case for our lower bound. In \cref{sec:lower-final} we prove the lower bound.

\section{Why ten?}\label{sec:why10}
It may seem surprising that the number $10$ appears in the sample complexity of Pauli tomography. A qubit resides in a two-dimensional Hilbert space, with three Pauli operators per qubit. Thus, one might expect the tight sample complexity to be related to powers of two and/or three. In this context, ten is peculiar because it includes 5 as a prime factor—a seemingly unrelated number.

In this paper, we argue that $10$ is very natural for Pauli tomography. 
It's just that we should not look at the prime factors. For our problem, the right way to decompose the number is \[\mathbf{10=1+3^2}.\]  Thus, when raising it to the power of $N$, using the binomial theorem,
\begin{equation}
\label{equ:10-binom}
    10^{\nqubits} =(3^2+1)^\nqubits=\sum_{m=0}^\nqubits\binom{\nqubits}{m}(3^2)^{m}.
\end{equation}
In other words, $10^N$ is a combination of even powers of 3 and binomial coefficients, which now appears more naturally related to our problem: $\mathrm{3}$ comes from Pauli measurements. 
This simple expression  (or at least a partial sum of it) is the magic equation that arises in the proofs of both the upper and lower bounds.

\subsection{Review of the algorithm: $1+{3}^{\mathrm{2}}$ comes from $\ell_2$ norm analysis}
\label{sec:alg-review}
We now review the algorithm that achieves the $10^{\nqubits}$ sample complexity in \cite{Yu2020Pauli,de_Gois_2024, ADLY2025Paulinot}. Through the analysis, we not only observe how 10 arises in Pauli tomography, but also gain an important insight on how to close the gap in the lower bound.

Since there are a total of $3^\nqubits$ Pauli measurements, we evenly divide $\ns$ copies into $3^\nqubits$ groups of equal size $m=\ns/3^\nqubits$, each group measures the copies with the Pauli measurement associated with $P=\sigma_{1}\otimes\cdots \otimes \sigma_{\nqubits}$ where $\sigma_i\in \{X, Y, Z\}$. From the binary strings obtained from the measurements, we can obtain empirical estimates for the expectation value of each Pauli observable $Q\in \{X, Y, Z, \eye_2\}^{\otimes \nqubits}$, $\Tr[\rho Q]$.

Let $w(Q)$ be the weight of $Q$, which is the number of non-identity components in $Q$. We can learn about $\Tr[\rho Q]$ for all Pauli measurement $P$ that match the non-identity components in $Q$. For example, if $Q=X\otimes X\otimes \eye_2^{\otimes(\nqubits-2)}$, then all $P$ with $\sigma_1=\sigma_2=X$ can be used to learn about $\Tr[\rho Q]$ by multiplying the $\{-1, 1\}$ outcome from the first 2 qubits. For $Q$ with weight $w$, the number of such useful Pauli measurements is
\[
n(Q)= 3^{\nqubits-w(Q)}.
\]
Thus we get $n(Q)m$ i.i.d. $\{-1,1\}$ binary samples to estimate $\Tr[\rho Q]$. The empirical average $E(Q)$ has a variance of at most $1/(n(Q)m)$,
\begin{equation}
\label{equ:error-single-pauli}
    \expect{(E(Q)-\Tr[\rho Q])^2}\le \frac{1}{n(Q)m}=\frac{3^{w(Q)}}{\ns}.
\end{equation}
Note that a quantum state can be seen as a linear combination of Pauli observables (including $\eye_\dims$),
\[
\rho = \sum_{Q}\frac{\Tr[\rho Q]}{\dims}Q.
\]
Our estimate $\hat{\rho}$ is then $\hat{\rho}=\sum_{Q}\frac{E(Q)}{\dims}Q$.
Since Pauli observables are orthogonal, we can compute the Hilbert-Schmidt distance in terms of the $\ell_2$ distance of expectation values,
\begin{align*}
    \expect{\hsnorm{\rho-\hat{\rho}}^2}&=\frac{1}{\dims}\sum_{Q}\expect{(E(Q)-\Tr[\rho Q])^2}\\
    &\leq\frac{1}{\dims}\sum_{Q}\frac{3^{w(Q)}}{\ns}\\
    &=\frac{1}{\dims}\sum_{w=0}^{\nqubits}\binom{\nqubits}{w}3^w \cdot \frac{3^w}{\ns}\\
    &=\frac{1}{\dims \cdot \ns}\sum_{w=0}^{\nqubits}\binom{N}{w}9^w\\
    &=\frac{10^\nqubits}{\dims\ns}.
\end{align*}
The key is the second step, where we split the summation according to the weight of Pauli observables. Then the combinatorial Equation \eqref{equ:10-binom} naturally pops up. Using Cauchy-Schwarz and Jensen's inequality,
\begin{align*}
    \Paren{\expect{\tracenorm{\rho-\hat{\rho}}}}^2\le \expect{\tracenorm{\rho-\hat{\rho}}^2}\le \dims \expect{\hsnorm{\rho-\hat{\rho}}^2}\le \frac{10^\nqubits}{\ns}.
\end{align*}
Setting the right-hand-side to be $\eps^2$ gives the desired bound for $\ns$.

From the above analysis, especially \eqref{equ:error-single-pauli}, we can see that Pauli observables with larger weights contribute more to the final error, which grows exponentially with weight.

\subsection{Less is more: weakness in prior works}
\label{sec:10-lower}
\cite{ADLY2025Paulinot} exploits the weakness of Pauli measurements and obtains the first non-trivial lower bound specifically for Pauli measurements. 
 In that lower bound, we explicitly constructed perturbations along roughly $\ell=\dims^2/2$ different Pauli directions with the largest weights,
 \begin{equation}\label{equ:perturbation-informal}
     \sigma_z=\qmm+\frac{\eps}{\sqrt{\dims\ell}}\sum_{i=1}^\ell z_i V_i, \quad z\sim\{-1,1\}^\ell,
 \end{equation}
where $V_i$'s are normalized Pauli observables sorted in decreasing weight. 

It seems natural to choose $\ell\sim \dims^2$ since density matrices have dimension $\dims^2-1$, and to make the hard case as difficult as possible, it may be best to exploit all dimensions of freedom available. In fact, to our knowledge, nearly all previous works on tomography and state testing use constructions with roughly $\dims^2$ degrees of freedom for full-rank states. This includes Haar random unitary rotations \cite{HaahHJWY17,ODonnellW16,BubeckC020}, and Gaussian orthogonal ensembles \cite{Chen0HL22, chen2023does}. 

However, this is not necessarily true for restricted settings such as single-qubit measurements. The algorithm in~\cref{sec:alg-review} provides a strong hint on why that's the case: choosing $\ell =o(\dims^2)$ suffices to show that learning is hard using Pauli measurements.

To see this, let's ask a question: does there exist a state that lies in an $o(\dims^2)$-dimensional subspace, but the algorithm still incurs roughly the same error? Let us recall the expected Hilbert-Schmidt error and write things slightly differently,

\[
\expect{\hsnorm{\rho-\hat{\rho}}^2}=\frac{1}{\dims \ns}\sum_{w=0}^{\nqubits}\binom{N}{w}9^w=\frac{10^\nqubits}{\dims \ns}\sum_{w=0}^{\nqubits}\binom{N}{w}\Paren{\frac{9}{10}}^w\Paren{\frac{1}{10}}^{\nqubits-w}.
\]

The summation trivially equals 1. However, what if the summation instead starts at $9\nqubits/10$? Then,
\[
\sum_{w=9\nqubits/10}^{\nqubits}\binom{N}{w}\Paren{\frac{9}{10}}^w\Paren{\frac{1}{10}}^{\nqubits-w}=\probaOf{\binomial{\nqubits}{\frac{9}{10}}\ge \frac{9\nqubits}{10}}\simeq \frac{1}{2},
\]
which is still at least a constant (since the median of the binomial distribution is roughly the same as its mean). Therefore, even if a state $\rho$ only contains Pauli observables with weight at least $9\nqubits/10$, our analysis cannot do much better than $10^{\nqubits}/\eps^2$! The degree of freedom for this set of quantum states (i.e., the number of Pauli observables with weight at least $0.9\nqubits$) would be
\[
\sum_{w=9\nqubits/10}^{\nqubits}\binom{\nqubits}{w}3^w = \bigO{\dims^{1.9}}.
\]

This strongly suggests that we should only choose Pauli observables with $w\ge 0.9\nqubits$ in the construction since they already contribute to nearly half of the error in the algorithm, and thus $\ell=o(\dims^2)$. We will later show that this observation is not due to a weak upper bound analysis, but a fundamental constraint present in all \emph{single-qubit} measurements.

\section{Why is it technically nontrivial?}
\label{sec:overview}
As suggested by \cref{sec:10-lower}, we might need to choose $\ell$ much smaller than $\dims^2$. However, one important technical issue arises: would \cref{equ:perturbation-informal} even be a valid quantum state with high probability? 

Another issue is to argue that the same lower bound holds for arbitrary single-qubit measurements. The lower bound construction is specifically tailored to Pauli measurements and may not be suitable for other single-qubit measurements.

In this section, we provide an overview of how to resolve these issues.

\subsection{Better matrix concentration--only work for $\eps=O(1/\log(d))$}
To guarantee that the construction is valid, existing works \cite{ADLY2025Paulinot,LiuA24,liu2024restricted} choose $\ell\sim \dims^2$.
%One reason that \cite{ADLY2025Paulinot} and prior works \cite{liu2024role,liu2024restricted} choose $\ell\sim \dims^2$ is that they can only guarantee that the construction is valid when $\ell$ is large enough. 
Fortunately, this is not always necessary, at least for some parameter regimes.

For \eqref{equ:perturbation-informal} to be a valid state, it suffices to argue that $\opnorm{\sigma_z-\qmm}=O(1/\dims)$ with high probability. Using matrix Bernstein/Chernoff bounds, we can achieve the desired bound up to a $\log(\dims)$ factor.

\begin{theorem}[Matrix Bernstein inequality \cite{Tropp_2011}]
    Let $X_1, \ldots, X_{n}\in \C^{\dims\times\dims}$ be independent random Hermitian matrices with $\expect{X_i}=0$, $\opnorm{X_i}\le B$, and $\sigma^2\eqdef \opnorm{\expect{\sum_{i}X_i^2}} $. Then for all $t\ge 0$,
    \[
    \probaOf{\opnorm{\sum_{i=1}^n X_i}\ge t}\le \dims \exp\Paren{-\frac{t^2}{\sigma^2 + Bt/3}}.
    \]
\end{theorem}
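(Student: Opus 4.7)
The plan is to follow the matrix Laplace transform approach of Ahlswede--Winter, sharpened by Tropp using Lieb's concavity theorem. First, since $\opnorm{Y}=\max(\lambda_{\max}(Y),\lambda_{\max}(-Y))$ for Hermitian $Y$, a union bound reduces the problem to bounding $\probaOf{\lambda_{\max}(\sum_i X_i)\ge t}$; the identical argument applied to $\{-X_i\}$ then delivers the lower tail, and the factor of $2$ is absorbed into the prefactor $\dims$ (at worst at the cost of a constant in the exponent).

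Second, I would invoke the matrix Chernoff estimate: for any $\theta>0$,
\[
\probaOf{\lambda_{\max}\Paren{\sum_i X_i}\ge t}\le e^{-\theta t}\expect{\Tr\exp\Paren{\theta\sum_i X_i}},
\]
which follows from Markov applied to $\lambda\mapsto e^{\theta\lambda}$ together with $e^{\theta\lambda_{\max}(A)}\le\Tr e^{\theta A}$ for Hermitian $A$. The trace exponential on the right is where noncommutativity of the summands starts to matter, and a naive attempt to factor it fails because $X_i$ and $X_j$ need not commute.

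Third, I would invoke Lieb's concavity theorem in its cumulant-subadditivity form to conclude
\[
\expect{\Tr\exp\Paren{\theta\sum_i X_i}}\le\Tr\exp\Paren{\sum_i\log\expect{e^{\theta X_i}}},
\]
replacing the product of scalar MGFs used in the classical Bernstein proof. Each factor is bounded operator-theoretically via the power series $e^{\theta X_i}=I+\theta X_i+\sum_{k\ge 2}\theta^k X_i^k/k!$ together with the operator inequality $X_i^k\preceq B^{k-2}X_i^2$ (which uses $\opnorm{X_i}\le B$ and the zero-mean condition to kill the linear term); this yields
\[
\expect{e^{\theta X_i}}\preceq \exp\Paren{\frac{e^{\theta B}-1-\theta B}{B^2}\expect{X_i^2}}.
\]

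Fourth, I would take logs, sum over $i$, and use $\Tr(M)\le \dims\,\lambda_{\max}(M)$ on the resulting trace exponential to reach
\[
\Tr\exp\Paren{\sum_i\log\expect{e^{\theta X_i}}}\le\dims\exp\Paren{\frac{e^{\theta B}-1-\theta B}{B^2}\sigma^2}.
\]
Combining with step two and choosing $\theta=t/(\sigma^2+Bt/3)$, then applying the elementary scalar bound $e^u-1-u\le u^2/(2-2u/3)$ for $u\in[0,3)$ with $u=\theta B$, collapses the exponent to the claimed $-t^2/(\sigma^2+Bt/3)$. The main obstacle is step three: Lieb's concavity theorem is a deep result that is what makes the noncommutativity of the $X_i$'s tractable inside the trace exponential; every other step is a near-mechanical matrix analogue of the classical scalar Bernstein argument.
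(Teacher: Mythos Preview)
The paper does not prove this statement at all: it is quoted verbatim as a known result from \cite{Tropp_2011} and used as a black box. Your sketch is precisely Tropp's original argument (matrix Laplace transform, Lieb's concavity to obtain cumulant subadditivity, the operator MGF bound via $X_i^k\preceq B^{k-2}X_i^2$, then scalar optimization), so in that sense it is correct and matches the source the paper cites.

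One small point: your final optimization actually yields the exponent $-t^2/\bigl(2(\sigma^2+Bt/3)\bigr)$, not $-t^2/(\sigma^2+Bt/3)$; and the union over $\pm X_i$ produces a prefactor $2\dims$, not $\dims$. These are harmless constant discrepancies that appear to be a slight misstatement in the paper's transcription of Tropp's theorem rather than a flaw in your reasoning, and they do not affect any downstream use in the paper.
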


For any normalized Pauli $V_i$, $B=\opnorm{V_i}=1/\sqrt{\dims}$. Furthermore, since $V_i^2=\eye_\dims/\dims$, we have for $\ell$ normalized Paulis, $\sigma^2=\ell/\dims$. Therefore,

\[
\probaOf{\opnorm{\sum_{i=1}^\ell z_iV_i}\ge t}\le \dims \exp \Paren{-\frac{t^2}{\ell/\dims + t/(3\sqrt{\dims})}}.
\]
Choosing $t=10\sqrt{\frac{\ell}{\dims}}\log(\dims)$, and using $\dims<\ell <\dims^2$, we have with probability at least $1-1/\poly(\dims)$,
\[
\opnorm{\sigma_z-\qmm}\le \bigO{\frac{\eps}{\dims}\log(\dims)}.
\]
Thus, when $\eps=O(1/\log(\dims))$, we have a valid state with high probability. We can further show using H\"older inequality that with the same probability, $\tracenorm{\sigma_z-\qmm}\ge\eps$. Thus, we have a valid construction that is $\eps$-far from the target state $\qmm$.

Is the additional logarithmic requirement on $\eps$ necessary?

We will employ a more advanced matrix concentration result that better takes advantage of the non-commutativity of Pauli matrices. Furthermore, the failure probability can be decreased to be super-polynomially small in $\dims$. 
See \cref{sec:hard-case}.

\subsection{Measurement information channel for single-qubit measurements}
\label{sec-mic-single-qubit-overview}
We provide an intuition for why Pauli measurements should be the best among single-qubit measurements. 
The Pauli bases, $\{\qbit{0}, \qbit{1}\}, \{\qbit{+}, \qbit{-}\}, \{\qbit{+\img}, \qbit{-\img}\}$ form a 2-design in the space of qubits. 
Thus, when we evenly assign all Pauli measurements, it is almost like performing a Haar-random basis measurement on each qubit. However, to rigorously prove the optimality of Pauli measurements requires some work.

The measurement information channel (MIC)\cite{liu2024restricted} is a powerful tool to quantify the distinguishability of measurements.
\begin{definition}
    Let $\POVM=\{M_x\}_x\in \cX$ be a measurement. The measurement information channel is defined as
\begin{equation}\label{equ:mic}
    \Luders_{\POVM}(A)\eqdef\sum_{x}M_x\frac{\Tr[M_xA]}{\Tr[M_x]}, \quad\Choi_{\POVM}\eqdef \sum_{x}\frac{\vvec{M_x}\vadj{M_x}}{\Tr[M_x]} \in \C^{\dims^2\times\dims^2},
\end{equation}
where $\vvec{M_x}=\VecOp(M_x)$ and $\vadj{M_x}=\VecOp(M_x)^{\dagger}$. $\Luders_{\POVM}$ is its Kraus representation whereas $\Choi_{\POVM}$ is the matrix representation that satisfies $\vvec{\Luders_{\POVM}(\rho)}=\Choi_{\POVM}\vvec{\rho}$.
\end{definition}

When $\POVM$ is a projective measurement, MIC has a physical meaning: it is exactly the quantum channel/instrument that describes the post-measurement state transition when the outcome is lost.

As a toy example, consider single qubit states $\rho=\eye_2/2$ and $\phi_z = \eye_2/2 + \alpha(z_X X+z_Y Y+z_Z Z)$, where $z_X, z_Y, z_Z$ are drawn uniformly from $\{-1,1\}^3$, $\alpha$ is chosen sufficiently small to make $\phi$ a valid quantum state. We are essentially mimicking the construction \eqref{equ:perturbation-informal} for single-qubit states.

For simplicity, consider a single-qubit basis measurement $\POVM=\{\qproj{u_1}, \qproj{u_2}\}$. Using MIC, we can compute the distance of post-measurement states of $\rho$ and $\phi_z$. After measuring $\rho$ with $\POVM$, the post-measurement state is still $\eye_\dims/2$. For $\phi$, the post-measurement state is
\[
 \Luders_{\POVM}(\phi)=\sum_{i=1,2}\qproj{u_i}\matdotprod{u_i}{\phi}{u_i}=\frac{\eye_2}{2}+\alpha\sum_{i=1,2}\qproj{u_i}\matdotprod{u_i}{X+Y+Z}{u_i}.
\]
The Hilbert-Schmidt distance is
\[
\hsnorm{\Luders_{\POVM}(\rho-\phi_z)}^2=\alpha^2\sum_{i=1,2}\matdotprod{u_i}{z_X X+z_Y Y+z_Z Z}{u_i}^2,
\]
which is also the $\ell_2$ distance and chi-square divergence (up to constants) between outcome distributions. Since  $\dims =2$, the total-variantion distance is within a constant factor of $\ell_2$ distance, so the above quantity characterizes distinguishability between two states. Note that $z_X, z_Y, z_Z$ are chosen independently from $\{-1, 1\}$. In expectation,
\[
\expectDistrOf{z}{\hsnorm{\Luders_{\POVM}(\rho-\phi_z)}^2}=\alpha^2\sum_{i=1,2}(\matdotprod{u_i}{X}{u_i}^2+\matdotprod{u_i}{Y}{u_i}^2+\matdotprod{u_i}{Z}{u_i}^2)=2\alpha^2.
\]
The final step is because $\eye_2, \sigma_X, \sigma_Y, \sigma_Z$ form an orthogonal basis for $2\times 2$ complex matrices and $\Tr[(\qproj{u_i})^2]=1$, so $\matdotprod{u_i}{X}{u_i}^2+\matdotprod{u_i}{Y}{u_i}^2+\matdotprod{u_i}{Z}{u_i}^2=1$. 

Therefore, regardless of the choice of the basis measurement, the expected distance between the post-measurement states of $\rho$ and $\phi_z$ is still close. This serves as a simple but intuitive example of how MIC characterizes the power of any single-qubit measurement.

In \cref{sec:MI-single-qubit}, we provide a complete argument for arbitrary single-qubit POVMs and how our construction using Pauli observables still suffices to show a tight lower bound for single-qubit measurements.

\section{Preliminaries}
\label{sec:preliminaries}
\subsection{Quantum States and Measurement}
\paragraph{Quantum states.}
A \textit{pure state} is presented by a unit vector $\ket{\psi} \in \C^\dims$. $\ket{\psi}$ is in the span of the orthonormal basis of the Hilbert space,
\begin{align*}
    \ket{\psi} = \sum_{i=1}^\dims \alpha_i, \ket{e_i} \quad \sum_{i=1}^\dims |\alpha_i|^2 = 1.
\end{align*}
Measuring the state along the basis results in observing $\ket{e_i}$ state with probability $|\alpha_i|^2$. \\
More generally, one can write a statistical mixture of pure states, known as mixed states. 
\begin{align*}
    \rho = \sum_{i=1}^\dims p_i \ket{\psi_i} \bra{\psi_i},
\end{align*}
where the pure state $\ket{\psi_i}$ is observed with probability $p_i$. Mixed states have the following properties:
\begin{align*}
    \rho \succeq 0, \quad \Tr[\rho] = 1.
\end{align*}

\paragraph{POVMs.}
Measurements are typically represented using \emph{Positive Operator-Valued Measure} (POVM). They are represented by a collection of p.s.d observables that add up to the identity.
\begin{align*}
    \POVM = \left\{M_x\right\}_{x \in \mathcal{X}}, \quad \sum_{x \in X} M_x = \eye_d.
\end{align*}
Born's rule states that the probability of observing outcome $x$ for state $\rho$ is
\begin{align*}
    \Pr[X=x] = \Tr[M_x \rho].
\end{align*}
Moreso, Born's rule states that $\rho$ collapses to some state after measurement. In some cases, the measurement is repeatable on the same state if $M_x$ are all orthogonal. However, this is not typically the case. Thus, the copy-wise tomography task considers \emph{measuring each copy once}.
\subsection{Problem setup}
Given $n$ copies of $\rho$, the goal is to design a measurement $\POVM^n = \{\POVM^i\}_{i=1}^n$ and an estimator $\hat{\rho}:\cX^{\ns}\mapsto\C^{\dims\times\dims}$ such that
\begin{align*}
    \inf_{\rho} \Pr[\tracenorm{\hat{\rho}(x^n) - \rho} \leq \eps] \geq \frac{99}{100},
\end{align*}
where $\tracenorm{\cdot}$ is the trace distance described in \cref{sec:hilbert}, and $x^\ns=(x_1, \ldots, x_\ns)$ are the measurement outcomes. Furthermore, we allow adaptivity in measurement, so the measurement on one copy can depend on the outcome of previously measured copies, 
\begin{align*}
   \p_\rho^{x_i|x^{i-1}}(x) = \Tr[M^i_x \rho].
\end{align*}
\subsection{Hilbert space of linear operators}
\paragraph{Hilbert space.} The Hilbert space over complex matrices is the set of all $A \in \C^{\dims \times \dims}$ equipped with the Hilbert-Schmidt inner product $\hdotprod{A}{B} = \Tr[A^\dagger B]$. The set of Hermitian matrices is in the span of complex matrices with real coefficients; thus, Hermitian matrices are described by a real Hilbert space with the same inner product.
\paragraph{Super-operators.} A linear super-operator $\mathcal{N}(\cdot)$ describes the linear mapping between operators. It is often convenient to describe the super operator in respect to the linear isomorphism $\VecOp(\qoutprod{i}{j})\eqdef \qbit{j}\otimes \qbit{i}$, where $\{\ket{j}\}_{j=1}^d$ is the computational basis. We will call $\vvec{A} = \VecOp(A)$ for convenience. In such case, $\vvec{\mathcal{N}(X)} = C(\mathcal{N})\vvec{X}$. Where $C(\mathcal{N})$ is the matrix representation of $\mathcal{N}$ in $\C^{\dims^2 \times \dims^2}$.
\paragraph{Schatten norms.} \label{sec:hilbert}
Consider a linear operator $A \in \C^{\dims \times \dims}$ with singular values $(\lambda_1,..., \lambda_d)$. The \emph{p-schatten norm} ($\|\cdot\|_{S_p}$) of $A$ is the $\ell_p$ norm of the vector $(\lambda_1,..., \lambda_d)$.
Furthermore, for hermitian $B$, the p-schatten norm represents the $\ell_p$ norm of the absolute value of the eigenvalues.
Examples of the Schatten norms include
\begin{equation}
    \hsnorm{A} = \sqrt{\Tr[A^\dagger A]} = \|A\|_{S_2},
    \; \; \tracenorm{A} = \Tr[|A|] = \|A\|_{S_1},
    \; \; \opnorm{A} = \max_{1 \leq i \leq d} |\lambda_i| = \|A\|_{S_\infty}.
\end{equation}
\subsection{Pauli observables and basis measurements}
Pauli basis measurements measure along the direction of Pauli observables, which are represented by the tensor product of Pauli matrices (along with $\eye$),
\begin{equation}
    X = \begin{bmatrix}
    0 & 1 \\
    1 & 0
\end{bmatrix},\quad
    Y = \begin{bmatrix}
    0 & -i \\
    i & 0
\end{bmatrix},\quad
    Z = \begin{bmatrix}
    1 & 0\\
    0 & -1
\end{bmatrix}.
\label{equ:pauli-ops}
\end{equation}
Let $\Sigma = \{\frac{1}{\sqrt{2}} X, \frac{1}{\sqrt{2}} Y. \frac{1}{\sqrt{2}} Z\}$ be the normalized Pauli Matrices. Due to $\Sigma \cup \{\frac{\eye}{\sqrt{2}}\}$ being an orthonormal basis for quantum states, $\mathcal{V} = \{\Sigma \cup \{\frac{\eye}{\sqrt{2}}\}\}^{\otimes N}$ form an orthonormal basis for $2^N$-dimensional mixed states,
\begin{equation}
    \rho = \frac{\eye_\dims}{\dims} +\sum_{P\in\pauliObsSet}\alpha_PP, \quad\alpha_P=\frac{\Tr[\rho P]}{\dims},
    \label{equ:pauli-decomposition}
\end{equation}
where $\pauliObsSet = \mathcal{V} \setminus \frac{\eye_\dims}{\dims}$. 
Thus, measuring along an observable $P$ provides information about the coordinate $\alpha_P$. $\POVM^P = \{M_{-1}, M_1\}=\{\frac{\eye_d-P}{2}, \frac{\eye_d+P}{2}\}=\left\{\qproj{u^+_{P}}, \qproj{u^-_{P}}\right\}$ corresponds to a \textit{Pauli observable Measurement}. \\ \\
A Pauli basis measurement for observable $P = \sigma_1 \otimes ... \otimes \sigma_N$ is indexed by $x \in \{-1,1\}^N$ and considers the tensor product of the observable measurements,
\begin{equation}
\label{def:Pauli_basis}
    M_x = \bigotimes_{i=1}^N \frac{\eye + x_i \sigma_i}{2} \quad,\quad
    \Tr[M_x P'] = \prod_{i=1}^N \left(x_i \indic{\sigma_i = \sigma'_i} +  \indic{\sigma_i = \eye}\right).
\end{equation}
The difficulty of learning about the coordinate $\alpha_P$ in \cref{equ:pauli-decomposition} is characterized by the weight of P,
\begin{equation}
    w(P) = \sum_{i=1}^N \indic{\sigma_i \neq \eye}.
\end{equation}
For lower weight Pauli observables, there exists multiple measurements where $\Tr[M_x P']$ are non-zero. Precisely, there are $3^{N-w}$ measurements that give us information about $\alpha_{P^{'}}$. Notice that weight $N$ observables require that the corresponding observables between the measurement and $P'$ match for every qubit, so there exists only one measurement that can learn $\alpha_{P^{'}}$.
\subsection{Probability divergences}
We will be bounding changes in post-measurement distributions when a quantum state is perturbed. Thus, it will be helpful to discuss relevant probability divergences. Let $\p$ and $\q$ be discrete distributions over $\mathcal{X}$. The \emph{total variation distance} is defined as 
\[
\totalvardist{\p}{\q}\eqdef\sup_{S\subseteq\mathcal{X}}(\p(S)-\q(S))=\frac{1}{2}\sum_{x\in\mathcal{X}}|\p(x)-\q(x)|.
\]
The \emph{KL-divergence}  is (for $\q$ absolutely continuous with respect to $\p$)
\[
\kldiv{\p}{\q}\eqdef\sum_{x\in\mathcal{X}}\p(x)\log\frac{\p(x)}{\q(x)}.
\]
The \emph{symmetric KL-divergence} is 
$\kldivsym{\p}{\q}:=\frac{1}{2}(\kldiv{\p}{\q}+\kldiv{\q}{\p})$. \\
The \emph{chi-square divergence} is
\[
\chisquare{\p}{\q}\eqdef \sum_{x\in\mathcal{X}}\frac{(\p(x)-\q(x))^2}{\q(x)}.
\]
By Pinsker's inequality and the concavity of the logarithm,
\[
2\totalvardist{\p}{\q}^2\le \kldiv{\p}{\q}\le \chisquare{\p}{\q}.
\]
We define $\ell_p$ distance as $
\norm{\p-\q}_p\eqdef\Paren{\sum_{x\in\mathcal{X}}{|\p(x)-\q(x)|^p}}^{1/p}.
$

\emph{Mutual information} is an essential tool in our analysis as it helps define \emph{a correlation} between the outcome distribution and the parameters in the lower-bound construction. Let $(X,Y)$ be a joint discrete random variable in $\mathcal{X}\times\mathcal{X}$ with marginals $p_X, p_Y$ and joint distribution $p_{XY}$. The mutual information is defined as
\begin{align*}
    I(X;Y) = \sum_{x,y \in \mathcal{X} \times \mathcal{X}} p_{XY}(x,y) \log \frac{p_{XY}(x,y)}{p_X(x)\cdot p_Y(y)} = \kldiv{p_{XY}}{p_X \otimes p_Y}.
\end{align*}

\section{New design of the hardness cases--the missing piece}\label{sec:hard-case}
\begin{definition}
 \label{def:perturbation}
     Let $\ell\le\dims^2-1$ and $\hbasis=(V_1, \ldots, V_{\dims^2}=\frac{\eye_\dims}{\sqrt{\dims}})$ be an orthonormal basis of $\Herm{\dims}$ (the space of $\dims\times\dims$ Hermitian matrices), and $\cd$ be a universal constant. Let  $\ptb=(\ptb_1, \ldots, \ptb_\ell)$ be uniformly drawn from $\{-1, 1\}^\ell$, we define $\sigma_{\ptb}=\qmm + \barDelta_{\ptb}$ where
     \begin{equation}
         \Delta_{\ptb} = \frac{\cd\eps}{\sqrt{\dims}}\cdot\frac{1}{\sqrt{\ell}}\sum_{i=1}^\ell \ptb_iV_i, \quad \barDelta_{\ptb}= \Delta_{\ptb}\min\left\{1, \frac{1}{2\dims \opnorm{\Delta_{\ptb}}}\right\},
         \label{equ:delta_z}
     \end{equation}
     Let the distribution of $\sigma_z$ be $\ptbDistr(\hbasis)$.
 \end{definition}
 
\begin{theorem}\label{Hard}
    Let $V_1, \ldots, V_{\dims^2-1}$ be normalized Pauli matrices and $W=\sum_{i=1}^\ell z_iV_i$ with $\ell \ge \dims^{\frac{3}{2}}$. There exists a universal constant $C$ such that with probability at least $1-\exp(-\dims)$ over uniform $z\sim \{-1, +1\}^\ell$, 
    \[
    \opnorm{W}\le C \sqrt{\frac{\ell}{\dims}}.
    \]
    Thus $\tracenorm{\sigma_z-\qmm}\ge \eps$ with probability at least $1-\dims\exp(-\ell^{\frac{1}{4}})$ for appropriately chosen $c$ in~\cref{def:perturbation}, e.g. $c\le 1/200$. 
\end{theorem}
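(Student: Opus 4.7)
My plan is to prove the two conclusions in turn; the operator-norm bound is the main technical step, and once it is in hand the trace-distance bound follows from a short H\"older-type argument.

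\textbf{Step 1: operator-norm bound on $W$.} The goal is to prove $\opnorm{W}\le C\sqrt{\ell/\dims}$ with a super-polynomial tail, shaving the $\sqrt{\log \dims}$ factor that the matrix Bernstein inequality quoted in the preceding section would leave behind. I would proceed by the moment method, exploiting the algebraic structure of Pauli matrices. Writing $V_i=P_i/\sqrt{\dims}$ for unnormalized Paulis with $P_i^2=\eye$, for every even $k$,
\[
\EE_z\Tr W^{2k} \;=\; \dims^{-k}\sum_{i_1,\dots,i_{2k}=1}^{\ell}\EE[z_{i_1}\cdots z_{i_{2k}}]\,\Tr[P_{i_1}\cdots P_{i_{2k}}],
\]
so only index tuples in which every entry appears an even number of times, and whose Pauli product collapses to $\pm\eye$, contribute. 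Because any two Paulis either commute or anti-commute, the dominating configurations are non-crossing pair matchings, exactly as in free-probability moment computations, and a careful sign accounting gives $\EE\Tr W^{2k}\le \dims\cdot C_k\cdot (\ell/\dims)^k$ for the Catalan number $C_k\le 4^k$. Using $\opnorm{W}^{2k}\le \Tr W^{2k}$ and Markov with $t=C\sqrt{\ell/\dims}$ produces $\probaOf{\opnorm{W}\ge t}\le \dims\,(4/C^2)^k$; taking $C$ large and $k=\Theta(\dims)$ yields the desired $\exp(-\Omega(\dims))$ tail. Alternatively, one could invoke a ``log-free'' matrix concentration theorem tailored to random matrices with approximately non-commutative structure (e.g., Bandeira--Boedihardjo--van Handel) as a black box; the hypothesis $\ell\ge \dims^{3/2}$ is exactly what places us in the sub-Gaussian regime where such refinements apply.

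\textbf{Step 2: trace-distance bound.} Conditioning on the event from Step 1,
\[
\opnorm{\Delta_z}=\frac{c\eps}{\sqrt{\dims\ell}}\opnorm{W}\le \frac{Cc\eps}{\dims},
\]
so for $c$ small enough (the threshold $c\le 1/200$ is comfortable once the universal $C$ is fixed) we have $2\dims\opnorm{\Delta_z}\le 1$, the truncation in \eqref{equ:delta_z} never activates, and $\barDelta_z=\Delta_z$, giving $\sigma_z-\qmm=\Delta_z$. Since the $V_i$ are orthonormal in Hilbert--Schmidt inner product and $z_i^2=1$, the Hilbert--Schmidt norm is \emph{deterministic}:
\[
\hsnorm{\Delta_z}^2=\frac{c^2\eps^2}{\dims\ell}\sum_{i=1}^\ell z_i^2=\frac{c^2\eps^2}{\dims}.
\]
Combining with the H\"older-type inequality $\hsnorm{A}^2\le\tracenorm{A}\opnorm{A}$ yields
\[
\tracenorm{\sigma_z-\qmm}=\tracenorm{\Delta_z}\ge \frac{\hsnorm{\Delta_z}^2}{\opnorm{\Delta_z}}\ge \frac{c\eps}{C},
\]
which is $\Omega(\eps)$ and, after absorbing the constant into the normalization (equivalently, rescaling $c$ or $\eps$), is the stated lower bound. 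The only failure mode is the complement of Step 1's event, which has probability at most $\exp(-\Omega(\dims))\le \dims\exp(-\ell^{1/4})$ under $\ell\ge \dims^{3/2}$.

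\textbf{Main obstacle.} The whole argument is gated by Step 1: one must obtain $\opnorm{W}=O(\sqrt{\ell/\dims})$ with super-polynomial probability and without a $\sqrt{\log\dims}$ factor. Standard matrix Bernstein correctly identifies the variance scale $\sigma^2=\ell/\dims$ but unavoidably loses a logarithm, which is precisely the deficiency that forces the $\eps=O(1/\log\dims)$ restriction flagged earlier. Eliminating it requires either the non-commutative combinatorics of Pauli products inside a moment method, or a refined, structure-aware concentration theorem as a black box. By contrast, Step 2 is little more than orthonormality plus H\"older, and the universal choice $c\le 1/200$ is forced only by the need to keep the clipped perturbation $\barDelta_z$ equal to the unclipped $\Delta_z$ on the good event.
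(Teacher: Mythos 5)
Your fallback in Step 1 --- invoking Bandeira--Boedihardjo--van Handel as a black box --- is exactly what the paper does: it applies their Theorem~1.4 with $Z_i=z_iV_i$, computes $\sigma=\sqrt{\ell/\dims}$, $v=\sigma^*=1$ (since $\mathrm{Cov}(X)=\sum_i\vvec{V_i}\vadj{V_i}$ is a projection), $R=1/\sqrt{\dims}$, and chooses $t=\ell^{1/4}$ so that all correction terms are $O(\opnorm{X_{\mathrm{free}}})=O(\sqrt{\ell/\dims})$; the resulting failure probability is $\dims e^{-\ell^{1/4}}$, which is the probability in the second claim of the theorem. Your Step 2 (on the good event the truncation never activates, so $\barDelta_\ptb=\Delta_\ptb$, and then $\tracenorm{\Delta_\ptb}\ge \hsnorm{\Delta_\ptb}^2/\opnorm{\Delta_\ptb}=\Omega(\eps)$ by orthonormality plus H\"older) is the intended argument and is in fact spelled out more carefully than in the paper, which leaves this step implicit; you are also right that the separation obtained is $c\eps/C$ rather than $\eps$ and must be absorbed by rescaling.

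Two cautions on your primary route. First, the moment bound $\EE\Tr W^{2k}\le \dims\, C_k(\ell/\dims)^k$ with only non-crossing pairings surviving is asserted, not proved: crossing pairings do not vanish for Pauli products (they contribute with signs from anticommutation that must genuinely cancel), and index patterns of multiplicity four or more are negligible only when $k\ll\sqrt{\ell}$. To reach a tail of $\exp(-\Omega(\dims))$ you must take $k=\Theta(\dims)$, but for $\ell=\dims^{3/2}$ this means $k\gg\sqrt{\ell}=\dims^{3/4}$, so the higher-multiplicity terms dominate and the claimed moment bound fails in exactly the regime you need it. Second, even granting optimal concentration, a fixed-direction Rademacher computation shows the tail for the event $\opnorm{W}\le C\sqrt{\ell/\dims}$ cannot in general be pushed much below $\exp(-\Theta(\ell/\dims))$, which is $\exp(-\sqrt{\dims})$ at $\ell=\dims^{3/2}$; the theorem's first probability claim of $1-e^{-\dims}$ is stronger than what the paper's own proof (BBvH with $t=\ell^{1/4}$) delivers, and the quantity actually used downstream is $1-\dims e^{-\ell^{1/4}}$. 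So keep the black-box route and the $\dims e^{-\ell^{1/4}}$ tail; the moment-method sketch as written would not close.
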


\begin{proof}
The proof employs the following theorem,

\begin{theorem}[{\cite[Theorem 1.4]{Bandeira_2023}}]
     Let \( Z_1, \dots, Z_n \) be arbitrary independent \( d \times d \) self-adjoint centered random matrices, and let 
\[
X = \sum_{i=1}^{n} Z_i.
\]
Then,
% \[
% \mathbb{E} \opnorm{X} \leq \opnorm{X_{\text{free}}} + C \left\{ v^{\frac{1}{2}} \sigma^{\frac{1}{2}} (\log d)^{\frac{3}{4}} + R^{\frac{1}{3}} \sigma^{\frac{2}{3}} (\log d)^{\frac{2}{3}} + R \log d \right\}
% \]
% and
\[
\probaOf{\opnorm{X} \geq \opnorm{X_{\text{free}}} + C \left\{ v^{\frac{1}{2}} \sigma^{\frac{1}{2}} (\log d)^{\frac{3}{4}} + \sigma^* t^{\frac{1}{2}} + R^{\frac{1}{3}} \sigma^{\frac{2}{3}} t^{\frac{2}{3}} + R t \right\}} \leq d e^{-t}
\]
for all \( t \geq 0 \), where \( C \) is a universal constant, 
\[
\sigma = \opnorm{\mathbb{E} X^2}^{\frac{1}{2}}, \quad v = \opnorm{\text{Cov}(X)}^{\frac{1}{2}}, \quad \sigma^* = \sup_{\|v\| = \|w\| = 1} \mathbb{E} [ | \langle v, X w \rangle |^2]^{\frac{1}{2}} \leq v, \quad R = \max_i \opnorm{Z_i},
\]
and \( X_{\text{free}} \) is the free model associated with the centered Gaussian random matrix whose entries have the same covariance as those of \( X \) (in particular, \( \|X_{\text{free}}\| \leq 2\sigma \)). \\ \\
Here, for any centered $d \times d$ random matrix $X$, the covariance matrix $Cov(X)$ is defined as
\[
Cov(X)_{ij,kl} = \mathbb{E}[X_{ij}X_{kl}^*].
\]
\end{theorem}
\noindent In our problem, we set $Z_i=z_i V_i$. Thus,
\begin{align*}
    &\sigma
    =\opnorm{\mathbb{E} X^2}^{\frac{1}{2}}=\opnorm{\mathbb{E} \sum_{i,j} z_iz_j V_iV_j}^{\frac{1}{2}}=\opnorm{\frac{\ell}{d} I}^{\frac{1}{2}}=\sqrt{\frac{\ell}{d}},\\
   & \|X_{\text{free}}\| \leq 2\sigma=2\sqrt{\frac{\ell}{d}}.
\end{align*}
    We can further compute $Cov(X)$ as follows,
 \begin{align*}
 Cov(X)&=\sum_{i,j,k,l}Cov(X)_{ij,kl}\otimes \ketbra{ij}{kl} \\
 &= \expectDistrOf{}{X_{ij}X_{kl}^*\otimes \ketbra{ij}{kl}}\\
 &=\expectDistrOf{}{\vvec{X}\vadj{X}}\\
 &=\expectDistrOf{}{\sum_{i,j} z_iz_j\vvec{V_i}\vadj{V_j}}\\
 &=\sum_{i} \vvec{V_i}\vadj{V_i}.
 \end{align*}
 This implies that
  \begin{align*}
    v = \opnorm{\text{Cov}(X)}^{\frac{1}{2}}=1, \  \mathrm{and}\ \      \sigma^*\leq v=1, \  \mathrm{and}\ \   R=\frac{1}{\sqrt{d}}. 
\end{align*}
As $\dims^{\frac{3}{2}}\leq \ell \leq d^2$, we will choose
\[
t= \ell^{\frac{1}{4}}.
\]
This choice guarantees that
\[
 v^{\frac{1}{2}} \sigma^{\frac{1}{2}} (\log d)^{\frac{3}{4}} + \sigma^* t^{\frac{1}{2}} + R^{\frac{1}{3}} \sigma^{\frac{2}{3}} t^{\frac{2}{3}} + R t=O(\opnorm{X_{\text{free}}}).\qedhere
\]

\end{proof}

\section{It is ten!} 
\label{sec:lower-final}
% \section{Mutual information bounds}
This section establishes the lower bound $\tilde{\Omega}(\frac{10^{\nqubits}}{\eps^2})$ by studying the mutual information bounds.

The key to establishing copy-complexity is to bound the average mutual information between the measurement outcome distribution and the z vector as a function of the number of copies ($n$) and the MIC. For convenience, let $\mathcal{G}$ be the set of all $z\in \{1,-1\}^\ell$ such that $\tracenorm{\sigma_z-\qmm}\ge \eps$ according the concentration in \cref{Hard}.
\begin{restatable}[{\cite[Theorem 4.4]{ADLY2025Paulinot}}]{theorem}{}
\label{thm:avg-MI-upper}
    Let $\sigma_\ptb\sim\ptbDistr(\hbasis)$ where $\ptb\sim\{-1,1\}^{\ell}$, $\out^\ns$ be the outcomes after applying $\POVM^\ns$. Then,
    \begin{align}
         \frac{1}{\ell}\sum_{i=1}^{\ell}\mutualinfo{\ptb_i}{\out^\ns}&\le \frac{8 \ns c^2 \eps^2}{\ell^2}  \sup_{\POVM\in {\povmset}}{\sum_{i=1}^\ell \vadj{V_i} \Choi_{\POVM} \vvec{V_i}} +16\ns c^2\eps^2 \Pr[\ptb \notin \mathcal{G}].
\label{equ:avg-MI-partial}
    \end{align}
\end{restatable}
As we have the information-theoretic limit of learning $z$, it then suffices to show that any optimal tomography estimator requires some constant amount of information. We must establish the mutual information lower bound with $\ell < d^2$, in which its not clear that there will be sufficient dependence between $z_i$ and the measurement outcomes.
\begin{lemma}
\label{lem:avg-MI-lower}
    Let $\sigma_\ptb\sim\ptbDistr(\hbasis)$ where $\ptb\sim\{-1,1\}^{\ell}$, $\out^\ns$ be the outcomes after applying $\POVM^\ns$ to $\sigma_\ptb^{\otimes\ns}$, and $\qest$ be an estimator using $\out^\ns$ that achieves an accuracy of $\eps$. Then,
    \begin{equation}
        \frac{1}{\ell}\sum_{i=1}^{\ell}\mutualinfo{z_i}{x^n}\ge\frac{1}{100}.
    \end{equation}
\end{lemma}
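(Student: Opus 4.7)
The plan is a coordinate-wise Fano (Assouad-style) argument. Define the ``good event''
\begin{equation*}
\mathcal{E} := \{\tracenorm{\qest - \sigma_\ptb} \le \eps\} \cap \{\ptb \in \mathcal{G}\},
\end{equation*}
which, by the estimator's accuracy hypothesis and \cref{Hard}, satisfies $\Pr[\mathcal{E}^c] \le 1/100 + d\exp(-\ell^{1/4}) \le 1/50$ in the parameter regime of interest. For each $i$, introduce the per-coordinate sign decoder
\begin{equation*}
\zest_i(\out^\ns) := \operatorname{sign}\bigl(\Tr[V_i(\qest - \qmm)]\bigr).
\end{equation*}
By orthonormality of the $V_j$'s and the definition of $\Delta_\ptb$ in \cref{def:perturbation}, the $i$-th empirical Pauli coefficient decomposes on $\mathcal{E}$ as $\alpha_i := \Tr[V_i(\qest-\qmm)] = c\eps\,\ptb_i/\sqrt{d\ell} + \beta_i$, where $\beta_i := \Tr[V_i(\qest - \sigma_\ptb)]$. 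Hence a disagreement $\zest_i \ne \ptb_i$ forces $|\beta_i| > c\eps/\sqrt{d\ell}$, and Markov's inequality together with $\sum_i \beta_i^2 \le \hsnorm{\qest - \sigma_\ptb}^2$ (by Parseval on the orthonormal family $V_1, \dots, V_\ell$) gives
\begin{equation*}
\bar P := \frac{1}{\ell}\sum_{i=1}^{\ell}\Pr[\zest_i \ne \ptb_i] \le \frac{d}{c^2\eps^2}\,\mathbb{E}\!\left[\hsnorm{\qest-\sigma_\ptb}^2\,\mathbf{1}_{\mathcal{E}}\right] + \Pr[\mathcal{E}^c].
\end{equation*}

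Coordinate-wise Fano ($H(\ptb_i \mid \out^\ns) \le h(\Pr[\zest_i \ne \ptb_i])$) combined with concavity of the binary entropy $h$ then yields
\begin{equation*}
\sum_{i=1}^\ell \mutualinfo{\ptb_i}{\out^\ns} = \ell - \sum_i H(\ptb_i \mid \out^\ns) \ge \ell\bigl(1 - h(\bar P)\bigr),
\end{equation*}
which is at least $\ell/100$ as soon as $\bar P \le 1/2 - \delta$ for a suitable absolute constant $\delta > 0$.

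The main technical obstacle is bounding $\mathbb{E}[\hsnorm{\qest - \sigma_\ptb}^2]$ sharply enough. The crude inequality $\hsnorm{A}^2 \le \tracenorm{A}^2 \le \eps^2$ (from $\sum \sigma_i^2 \le (\sum \sigma_i)^2$ for singular values) yields only $\bar P \lesssim d/c^2$, vacuous for $d \gg c^2$. I would close this gap with two ingredients that exploit the structure of the hard case. First, without loss of generality one may replace $\qest - \qmm$ by its Hilbert--Schmidt projection onto $\operatorname{span}(V_1,\dots,V_\ell)$: this preserves every $\alpha_i$ (so the decoder is unchanged) while shrinking the HS error to its ``signal-subspace'' component. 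Second, Hölder's inequality applied to a polarization identity for $\hdotprod{\qest - \sigma_\ptb}{\Delta_\ptb}$, combined with the operator-norm concentration $\opnorm{\Delta_\ptb} = O(c\eps/d)$ from \cref{Hard}, transfers the trace-norm accuracy into a HS bound at the signal scale $O(c^2\eps^2/d)$. With the parameter choices enforced by the validity truncation in \cref{def:perturbation}, this brings $\bar P$ below $1/2 - \Omega(1)$ and completes the proof.
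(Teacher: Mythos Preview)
Your coordinate-wise Fano scaffolding is the same as the paper's, and the final step (concavity of $h$ applied to the averaged coordinate error) is exactly right. The gap is in your ``Fix~2.'' H\"older applied to $\hdotprod{\qest-\sigma_\ptb}{\Delta_\ptb}$ controls only the single linear functional
\[
\sum_{i}z_i\beta_i \;=\; \frac{\sqrt{d\ell}}{c\eps}\,\Tr\!\bigl[\Delta_\ptb(\qest-\sigma_\ptb)\bigr],
\]
and bounding this by $\opnorm{\Delta_\ptb}\tracenorm{\qest-\sigma_\ptb}=O(c\eps^2/d)\cdot\sqrt{d\ell}/(c\eps)$ says nothing about $\sum_i\beta_i^2$. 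No polarization identity turns a bound on one inner product into a bound on the full $\ell_2$ mass of the $\beta_i$'s; after projecting onto $\Span(V_1,\dots,V_\ell)$ the best you get from $\tracenorm{\qest-\sigma_\ptb}\le\eps$ is still $\sum_i\beta_i^2\le\hsnorm{\qest-\sigma_\ptb}^2\le\eps^2$, which is exactly the crude bound you already flagged as losing a factor of $d$. Indeed the target $\sum_i\beta_i^2=O(c^2\eps^2/d)$ is false in general: take $\qest=\sigma_\ptb+\tfrac{\eps}{\sqrt{d}}V_1$, which has trace error $\eps$ but $\sum_i\beta_i^2=\eps^2/d\gg c^2\eps^2/d$ when $c$ is the small constant forced by \cref{Hard}. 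So your Markov step cannot be rescued along these lines.

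The paper avoids the HS route entirely. Instead of decoding each sign from $\qest$, it rounds $\qest$ to the nearest \emph{parametric} state $\sigma_{\zest}$ and proves a direct Hamming--trace separation (\cref{lemma:hamm-packing}): on $\mathcal{G}$ one uses the fixed dual witness $W_\ptb=\sum_i z_iV_i$ (normalized by its operator norm, which is $O(\sqrt{\ell/d})$ by \cref{Hard}) together with trace-norm duality. Because $\sigma_\ptb-\sigma_{\zest}$ lies in $\Span(V_i)$ with the specific structure $(1+C_{\zest})\Delta_w+(1-C_{\zest})\Delta_c$, the inner product $\Tr[W_\ptb(\sigma_\ptb-\sigma_{\zest})]$ is a \emph{nonnegative} combination dominated by $\ham{\ptb}{\zest}$, yielding $\tracenorm{\sigma_\ptb-\sigma_{\zest}}\gtrsim \tfrac{c\eps}{\ell}\ham{\ptb}{\zest}$ with no HS detour and no data-dependent witness. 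That linear separation, combined with the triangle inequality $\tracenorm{\sigma_\ptb-\sigma_{\zest}}\le 2\eps$, gives $\ham{\ptb}{\zest}/\ell=O(1)$ directly. The missing idea in your proposal is precisely this choice of decoder and witness.
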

With these statements, we can bound the quantity associated with the restricted channel's MIC and re-arrange both sides to get a lower bound on $n$.

\subsection{Hamming separation and MI lower bound}
An important part of the MI lower bound is establishing the relationship between the trace norm between two parametrized instances and the hamming distance between the associated parameters. This allows us to relate tomography to estimating binary vectors. As $\ell$ is much lower than it was before, we need to ensure that all $z_i$ parameters still need to be estimated to estimate $\sigma_{\zest}$.
   \begin{lemma}[Trace distance Hamming separation] \label{lemma:hamm-packing}
       Consider $z \in \mathcal{G}$. For any  $\hat{z} \in \left\{-1,1\right\}^{\ell}$,
       \begin{equation}
           \tracenorm{\sigma_\ptb - \sigma_{\zest}} \geq \frac{c \eps}{2 C \ell} \cdot\ham{\ptb}{\zest}.
       \end{equation}
   \end{lemma}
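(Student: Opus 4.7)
The plan is to lower bound $\tracenorm{\sigma_z - \sigma_{\hat z}}$ via the elementary matrix inequality
\[
\tracenorm{A} \;\geq\; \frac{\hsnorm{A}^2}{\opnorm{A}},
\]
which follows from $\sum_i\lambda_i^2 \leq (\max_i\lambda_i)\sum_i\lambda_i$ applied to the singular values of $A$. This reduces the lemma to two computations: a Hilbert--Schmidt lower bound exploiting orthonormality of the $V_i$'s, and an operator-norm upper bound supplied by \cref{Hard}.

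The first step is to verify that for the constant $c \leq 1/200$ the truncation in \cref{def:perturbation} is inactive at $z$. Since $z \in \mathcal{G}$, the concentration bound gives $\opnorm{\sum_i z_i V_i} \leq C\sqrt{\ell/d}$, hence $\opnorm{\Delta_z} \leq Cc\eps/d \leq 1/(2d)$ and $\barDelta_z = \Delta_z$. In the main case when $\hat z$ also satisfies the operator-norm concentration of \cref{Hard}, the difference simplifies to
\[
\sigma_z - \sigma_{\hat z} \;=\; \frac{c\eps}{\sqrt{d\ell}}\sum_{i=1}^\ell (z_i - \hat z_i)\, V_i .
\]
Orthonormality of $\{V_i\}$ together with $(z_i - \hat z_i)^2 \in \{0,4\}$ then yields $\hsnorm{\sigma_z - \sigma_{\hat z}}^2 = \tfrac{4c^2\eps^2}{d\ell}\ham{z}{\hat z}$, while the triangle inequality combined with \cref{Hard} at both $z$ and $\hat z$ gives $\opnorm{\sigma_z - \sigma_{\hat z}} \leq 2Cc\eps/d$. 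Substituting into $\tracenorm{A} \geq \hsnorm{A}^2/\opnorm{A}$ produces the bound $\tracenorm{\sigma_z - \sigma_{\hat z}} \geq \tfrac{2c\eps}{C\ell}\ham{z}{\hat z}$, which is four times the claim and will provide slack for the boundary case.

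The main technical obstacle is handling $\hat z$ outside $\mathcal{G}$, where the truncation factor $\alpha_{\hat z}$ may be strictly less than $1$ and the operator-norm concentration need not apply to $W_{\hat z}$. I would resolve this with a simple case split. If $\tracenorm{\sigma_{\hat z} - \qmm} < \eps/2$, then since $z \in \mathcal{G}$ the triangle inequality immediately gives $\tracenorm{\sigma_z - \sigma_{\hat z}} \geq \eps/2$, which dominates $\tfrac{c\eps}{2C\ell}\ham{z}{\hat z}$ using $\ham{z}{\hat z} \leq \ell$ and $c/C \leq 1$. Otherwise $\barDelta_{\hat z}$ is of the same order as $\Delta_{\hat z}$, and the previous HS/operator-norm computation goes through with the universal bound $\opnorm{\barDelta_{\hat z}} \leq 1/(2d)$ (built into \cref{def:perturbation}) replacing the direct use of \cref{Hard} at $\hat z$; the constant factor of four spare in the main case absorbs the extra loss and delivers the stated inequality.
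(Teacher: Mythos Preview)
Your approach via the elementary inequality $\tracenorm{A}\ge \hsnorm{A}^2/\opnorm{A}$ is different from the paper's, and your main case together with sub-case~(a) are fine. The gap is in sub-case~(b). Replacing the concentration bound $\opnorm{\barDelta_{\hat z}}\le Cc\eps/d$ by the universal truncation bound $\opnorm{\barDelta_{\hat z}}\le 1/(2d)$ weakens the operator-norm estimate by a factor of order $1/(c\eps)$, not by a constant; the factor of four you banked from the main case cannot absorb this. Concretely, with $\opnorm{\sigma_z-\sigma_{\hat z}}\le Cc\eps/d+1/(2d)$ and $\hsnorm{\sigma_z-\sigma_{\hat z}}^2\ge \tfrac{c^2\eps^2}{d\ell}\,\ham{z}{\hat z}$, the ratio gives only
\[
\tracenorm{\sigma_z-\sigma_{\hat z}}\;\ge\;\frac{c^2\eps^2}{\ell\,(1/2+Cc\eps)}\,\ham{z}{\hat z},
\]
which meets the target $\tfrac{c\eps}{2C\ell}\ham{z}{\hat z}$ only when $Cc\eps\gtrsim 1$. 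With $c\le 1/200$ and $\eps\le 1$ this fails. (One can rescue your route by observing that $\tracenorm{\barDelta_{\hat z}}\le \sqrt{d}\,\hsnorm{\barDelta_{\hat z}}\le c\eps$ for every $\hat z$, which makes sub-case~(b) vacuous under the paper's constants; but that is a different argument from the one you sketched.)

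The paper avoids the case split entirely by using trace-norm duality with the witness $W_z=\sum_i z_i V_i$, normalized by $\opnorm{W_z}\le C\sqrt{\ell/d}$ (which uses only $z\in\mathcal{G}$, never $\hat z\in\mathcal{G}$). Writing the truncation factor as $C_{\hat z}\in(0,1]$ and decomposing $\sigma_z-\sigma_{\hat z}=(1+C_{\hat z})\Delta_w+(1-C_{\hat z})\Delta_c$, one gets
\[
\Tr\!\big[W_z^\dagger(\sigma_z-\sigma_{\hat z})\big]
=\frac{c\eps}{\sqrt{d\ell}}\Big[(1+C_{\hat z})\,\ham{z}{\hat z}+(1-C_{\hat z})\,(\ell-\ham{z}{\hat z})\Big],
\]
and both bracketed terms are nonnegative for every $\hat z$. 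Dividing by $\opnorm{W_z}$ yields $\tracenorm{\sigma_z-\sigma_{\hat z}}\ge \tfrac{c\eps}{C\ell}\ham{z}{\hat z}$ uniformly, with no assumption on $\hat z$. The key point your argument misses is that aligning the dual witness with $z$ (rather than with $z-\hat z$) turns the truncation at $\hat z$ into a help rather than a hindrance.
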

\begin{proof}
It can be shown that reducing $\ell$ still yields similar results, as the operator norm will also reduce as well, from \cref{Hard}. For convenience, we will define

\begin{align*}
   C_{z} := \min\left\{1, \frac{1}{2\dims \opnorm{\Delta_{z}}}\right\} , \;
   \Delta_{w} := \frac{c \eps}{\sqrt{d\ell}}  \sum_{i=1}^\ell \indic{z_i \neq \hat{z}_i} z_i V_i, \;\Delta_{c} := \frac{c \eps}{\sqrt{d\ell}} \sum_{i=1}^\ell \indic{z_i = \hat{z}_i} z_i V_i.
\end{align*}
Notice that $C_z = 1$ since $\opnorm{W_z} \leq C \sqrt{\frac{\ell}{\dims}} \leq C \sqrt{\dims}$. By trace norm duality,
\begin{align*}
\tracenorm{\sigma_{\zest} - \sigma_{\ptb}} &=
     \tracenorm{((1+C_{\zest}) \Delta_w + (1-C_{\zest}) \Delta_c} \\
     &= \sup_{\opnorm{B} \leq 1} |\Tr[B^{\dagger} \left[(1+C_{\zest}) \Delta_w + (1-C_{\zest}) \Delta_c\right]]| \\
     &\geq \frac{\sqrt{\dims}}{C \sqrt{\ell}} \cdot
     |\Tr[W_z^{\dagger} \left[(1+C_{\zest}) \Delta_w + (1-C_{\zest}) \Delta_c\right]]| \\
     &= \frac{c \eps}{\sqrt{\dims \ell}}\cdot \frac{\sqrt{\dims}}{C \sqrt{\ell}} \cdot|(1+C_{\zest}) \delta_w + (1-C_{\zest}) \delta_c| \\
     &\geq  \frac{c \eps}{C \ell}\cdot |(1+C_{\zest}) \delta_w + (1-C_{\zest}) \delta_c| \\
     &\geq \frac{c \eps}{C \ell}\cdot \delta_w,
\end{align*}
where  $\delta_w = \ham{z}{\zest}$ and $\delta_c = \ell - \delta_w = \ell - \ham{z}{\zest}$. The reduction from $\Tr[W_z^\dagger \Delta_w] = \delta_w, \; \Tr[W_z^\dagger \Delta_c] = \delta_c$ occurs from orthormality of $V_i$.
\end{proof}
Now that we have established the trace distance Hamming separation, and can prove \cref{lem:avg-MI-lower}. The proof will proceed similarly to \cite{ADLY2025Paulinot}. 
\begin{proof}[Proof of \cref{lem:avg-MI-lower}] 
Given a procedure that computes $\hat{\sigma}$ such that $\tracenorm{\hat{\sigma}-\sigma_z} \leq \eps$ w.p. $\geq \frac{99}{100}$, we can say that the closest parametrized error to $\hat{\sigma}$, $\sigma_{\hat{z}}$, is close to $\sigma_z$ with high probability. Through a triangle inequality argument, we can say that $\tracenorm{\hat{\sigma}-\sigma_z} \leq \eps  \implies \tracenorm{\sigma_{\hat{z}}-\sigma_z} \leq 2\eps$. This results in $\Pr[\tracenorm{\sigma_{\hat{z}}-\sigma_z} \leq 2\eps] \geq \frac{99}{100}$. We decompose the expectation on this event to get
\begin{align*}
   \frac{1}{\ell}\cdot \expectDistrOf{}{\delta_w} &= \frac{1}{\ell} \cdot\expectDistrOf{}{\indic{\tracenorm{\sigma_z - \sigma_{\zest}} \leq 2 \eps} \cdot \delta_w} + \frac{1}{\ell}\cdot \expectDistrOf{}{\indic{\tracenorm{\sigma_z - \sigma_{\zest}} > 2 \eps} \cdot \delta_w} \\
   &\leq \frac{1}{\ell} \cdot\expectDistrOf{}{\indic{\tracenorm{\sigma_z - \sigma_{\zest}} \leq 2 \eps} \cdot \delta_w} + 1 \cdot \Pr[\tracenorm{\hat{\sigma}-\sigma_z} > 2\eps] \\
   &\leq \frac{1}{\ell} \cdot\expectDistrOf{}{\indic{\tracenorm{\sigma_z - \sigma_{\zest}} \leq 2 \eps} \cdot \delta_w} + 0.01.
   \label{eq:cond-expect-tom-lower}
\end{align*}
The second line uses the fact that $\tracenorm{\sigma_z -\sigma_{\hat{z}}} \leq 1$ for all $\hat{z}$. In the remaining expectation, when $z \in \mathcal{G}$,
\begin{align*}
   \frac{c \eps}{2 C \ell} \cdot \delta_w \leq \tracenorm{\sigma_z - \sigma_{\zest}} \leq 2 \eps 
   \implies  \frac{1}{\ell} \cdot \delta_w \leq \frac{4 C}{c}.
\end{align*}
Now, we use the concentration from \cref{Hard} and condition on the cases $z \in \mathcal{G}$ or $z \notin \mathcal{G}$ to get
\begin{align*}
\frac{1}{\ell} \cdot \expectDistrOf{}{\indic{\tracenorm{\sigma_z - \sigma_{\zest}} \leq 2 \eps} \cdot \delta_w} 
&\leq \Pr[\tracenorm{\sigma_z - \sigma_{\zest}} \leq 2 \eps]\left[ \Pr[z \in G] \cdot \frac{4C}{c} + \Pr[z \notin G] \cdot 1 \right] \\
&\leq \frac{4C}{c} + \Pr[z \notin G] \cdot 1
\end{align*}
$\Pr[z \notin \mathcal{G}]$ is exponentially decreasing in $\ell^{1/4}$. So for large enough $d$ and $c$, we have that the expected hamming distance is bounded by some constant.
\begin{equation}
    \label{equ:average-error-prob}
    \frac{1}{\ell} \sum_{i=1}^\ell \Pr[z_i \neq \hat{z}_i] \leq 0.41.
\end{equation}

We then apply the following lemma,
\begin{lemma}[{\cite[Lemma 10]{ACLST22iiuic}}]
\label{lem:MI-lower}
    Let $Z\in\{-1, 1\}^\ab$ be drawn uniformly and $Z-Y-\hat{Z}$ be a Markov chain where $\hat{Z}$ is an estimate of $Z$. Let $h(t)\eqdef -t\log t-(1-t)\log(1-t)$, then for each $i\in[\ab]$,
    \[
    \mutualinfo{Z_i}{Y}\ge 1-h(\probaOf{Z_i\ne \hat{Z}_i}).
    \]
\end{lemma}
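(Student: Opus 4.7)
The plan is to recognize \cref{lem:MI-lower} as the coordinate-wise binary specialization of Fano's inequality combined with the data processing inequality. Since each coordinate $Z_i$ is uniform on $\{-1,+1\}$ we have $H(Z_i)=1$, so
\[
\mutualinfo{Z_i}{Y} = 1 - H(Z_i \mid Y),
\]
and the task reduces to upper-bounding the conditional entropy $H(Z_i \mid Y)$ by $h(p_e)$, where $p_e \eqdef \probaOf{Z_i \ne \hat{Z}_i}$.

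First I would pass from the joint Markov chain $Z - Y - \hat{Z}$ to the coordinate-wise chain $Z_i - Y - \hat{Z}_i$. This is a one-line check: conditional independence of $Z$ and $\hat{Z}$ given $Y$ is preserved under marginalization over the remaining coordinates, so $Z_i$ and $\hat{Z}_i$ are conditionally independent given $Y$. Data processing then yields $H(Z_i \mid Y) \le H(Z_i \mid \hat{Z}_i)$, which replaces the (high-dimensional, arbitrary) conditioning on $Y$ by conditioning on the single bit $\hat{Z}_i$. At this stage I would invoke the classical Fano inequality in the form $H(Z_i \mid \hat{Z}_i) \le h(p_e) + p_e \log(|\mathcal{Z}|-1)$; since $Z_i$ takes only two values, $|\mathcal{Z}|-1 = 1$ and the correction term vanishes, leaving $H(Z_i \mid \hat{Z}_i) \le h(p_e)$. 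Chaining the two inequalities produces $\mutualinfo{Z_i}{Y} \ge 1 - h(p_e)$, which is exactly the claim.

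I do not anticipate any serious technical obstacle here: both the data processing inequality and Fano's inequality are textbook tools, and the binary alphabet collapses Fano's remainder term cleanly so that no additional slack appears. The only subtlety worth writing out carefully is the promotion of the vector-level Markov property to its single-coordinate version, and even this is a routine manipulation of conditional probabilities that incurs no quantitative loss. Consequently the proof should be executable in just a few lines, which is consistent with the fact that the result is cited from prior work rather than re-derived in detail.
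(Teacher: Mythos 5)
Your proposal is correct and is exactly the standard argument behind this lemma (which the paper simply cites from \cite{ACLST22iiuic} without reproving): marginalize the Markov chain to $Z_i - Y - \hat{Z}_i$, apply data processing to get $H(Z_i\mid Y)\le H(Z_i\mid \hat{Z}_i)$, and then use binary Fano, where the $p_e\log(|\mathcal{Z}|-1)$ remainder vanishes. No gaps; the coordinate-wise Markov property does follow from conditional independence being preserved under taking functions of each endpoint.
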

Since $h(p)$ is increasing for $p \leq 1/2$, we can apply $h(\cdot)$ to both sides of \cref{equ:average-error-prob},
\begin{align*}
h(0.41) &\ge  h\left(\frac{1}{\ell} \sum_{i=1}^\ell \Pr[z_i \neq \hat{z}_i]\right)  \\
\frac{1}{\ell} \sum_{i=1}^\ell\mutualinfo{z_i}{x^n} &\geq 1 - \frac{1}{\ell} \sum_{i=1}^\ell h(\Pr[z_i \neq \hat{z}_i]) \geq 1 - h\left(\frac{1}{\ell} \sum_{i=1}^\ell \Pr[z_i \neq \hat{z}_i]\right) \geq \frac{1}{100},
\end{align*}
where the second-to-last inequality used concavity of $h(\cdot)$.
\end{proof}
\subsection{Mutual information upper bound for single qubit measurements}\label{sec:MI-single-qubit}
Recall \cref{thm:avg-MI-upper}. The key to establishing the copy complexity is to bound the spectral quantity associated with the MIC. We will show that single-qubit measurements have the same spectral bound as Pauli basis measurements. One way to look at it is that the uniform sum of Pauli matrices form a valid 2-design, which is the uniform expectation over all rotations of a single-qubit. Thus, Pauli measurements capture the complete complexity of single-qubit measurements, in expectation. \paragraph{Single qubit measurement parameterization.}
We can parametrize a single-qubit POVM observable by the Pauli decomposition of a quantum state, that is scaled down by a convex factor $\alpha_o$
\begin{align*}
    &M_o^i = \alpha_o^i (I + \beta_X^{i, o} X + \beta_Y^{i, o} Y + \beta_Z^{i, o} Z) \\
    &\sum_{o \in O} \alpha_o^i = 1,\quad {\sum_{o \in O} {\alpha_o^i\beta_\sigma^{i, o} }}=0,\\
    &{\sum_{\sigma \in \left\{X,Y,Z\right\}} {\mid\beta_\sigma^{i, o} \mid}^2} \leq 1.
\end{align*}
The conditions in the second and third line are because the POVM is required to be summed to $\eye_2$ (convex constraint) and each observable is p.s.d (norm constraint). In this form, any p.s.d and Hermitian matrix is represented with an arbitrary trace between 0 and 2, $0 \leq \alpha_o^i \leq 1$. 

\paragraph{Application to lower bound framework.}
We utilize the same strategy as providing lower bounds for Pauli basis measurements by establishing the following lemma,
\begin{lemma}\label{lemma:single_qubit_MI}
Let $V_1, ..., V_{d^2-1}$ be normalized Pauli observables ordered in decreasing weight. $\ell$ is the number of Pauli observables with weight at least $w$. Then,
\begin{align*}
    \sum_{i=1}^{\ell}\vadj{V_i} {\Choi}_\POVM \vvec{V_i} \leq \sum_{m=w}^{N}{\nqubits\choose m}.
\end{align*}
\end{lemma}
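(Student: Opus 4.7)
The plan is to exploit the tensor-product structure of any single-qubit POVM. Write $\POVM = \bigotimes_{j=1}^\nqubits \POVM^j$ with each $\POVM^j = \{M_o^j\}_{o}$ acting on the $j$-th qubit. For a normalized Pauli $V = (\sigma_1 \otimes \cdots \otimes \sigma_\nqubits)/\sqrt{\dims}$ with $\sigma_j \in \{\eye,\pauliX,\pauliY,\pauliZ\}$, multiplicativity of the trace on tensor products yields
\[
\vadj{V}\Choi_\POVM \vvec{V} = \sum_{o_1,\ldots,o_\nqubits}\frac{|\Tr[V\, (M_{o_1}^1\otimes\cdots\otimes M_{o_\nqubits}^\nqubits)]|^2}{\Tr[M_{o_1}^1]\cdots\Tr[M_{o_\nqubits}^\nqubits]} \;=\; \frac{1}{\dims}\prod_{j=1}^{\nqubits} \vadj{\sigma_j}\Choi_{\POVM^j}\vvec{\sigma_j},
\]
reducing the $\nqubits$-qubit bound to a single-qubit calculation on each factor.

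Next I would evaluate the per-qubit factors using the parametrization $M_o^j = \alpha_o^j(\eye + \beta_X^{j,o}\pauliX + \beta_Y^{j,o}\pauliY + \beta_Z^{j,o}\pauliZ)$. Using $\Tr[\eye\, M_o^j] = 2\alpha_o^j$ and $\Tr[\sigma\, M_o^j] = 2\alpha_o^j\beta_\sigma^{j,o}$, a short calculation gives $\vadj{\eye}\Choi_{\POVM^j}\vvec{\eye} = 2$ and $\vadj{\sigma}\Choi_{\POVM^j}\vvec{\sigma} = 2\sum_o \alpha_o^j(\beta_\sigma^{j,o})^2$ for $\sigma\in\{\pauliX,\pauliY,\pauliZ\}$. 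Summing the latter over the three non-identity directions and invoking the POVM constraints $\sum_o \alpha_o^j = 1$ and $\sum_\sigma (\beta_\sigma^{j,o})^2 \leq 1$ yields the key inequality
\[
\sum_{\sigma\in\{\pauliX,\pauliY,\pauliZ\}} \vadj{\sigma}\Choi_{\POVM^j}\vvec{\sigma} \;\leq\; 2 \;=\; \vadj{\eye}\Choi_{\POVM^j}\vvec{\eye}.
\]
This is the three-versus-one balance underlying the appearance of $3^2+1=10$ in \cref{sec:alg-review}: on each qubit, summing over the three non-identity Paulis contributes at most as much as the identity alone, regardless of the measurement chosen.

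Finally I would organize the outer sum by support pattern $S\subseteq[\nqubits]$ of non-identity positions. For fixed $S$ with $|S|=w'$, the sum over the $3^{w'}$ Paulis whose non-identity support equals $S$ factorizes across qubits; identity positions each contribute a factor of $2$, while the three non-identity choices at each $j\in S$ together contribute at most $2$ by the bound above. Combining,
\[
\sum_{V:\,\mathrm{supp}(V) = S}\vadj{V}\Choi_\POVM\vvec{V} \;\leq\; \frac{1}{\dims}\cdot 2^{\nqubits-w'}\cdot 2^{w'} \;=\; 1,
\]
and summing over the $\binom{\nqubits}{w'}$ supports of each size $w'\geq w$ produces the claimed bound $\sum_{m=w}^{\nqubits}\binom{\nqubits}{m}$.

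I do not anticipate a serious obstacle here: the argument is essentially the tensor-product factorization of $\vadj{V}\Choi_\POVM\vvec{V}$ together with the POVM convexity and norm constraints. The one subtle bookkeeping point is the $1/\sqrt{\dims}$ normalization of $V$, whose square cancels exactly the $2^\nqubits$ arising from the identity-qubit factors to yield the clean value $1$ for each support pattern; this is where the tightness of the argument resides, matching precisely the behavior of uniformly-mixed Pauli measurements that achieve equality.
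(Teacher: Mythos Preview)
Your proposal is correct and follows essentially the same approach as the paper's proof: factorize $\vadj{V}\Choi_\POVM\vvec{V}$ over qubits using the tensor-product structure, bound each qubit's non-identity contribution by the POVM convexity and norm constraints, and organize the outer sum by support pattern to obtain $1$ per pattern and hence $\sum_{m=w}^{\nqubits}\binom{\nqubits}{m}$ in total. The only cosmetic difference is normalization bookkeeping---the paper absorbs the $1/\sqrt{2}$ into each single-qubit Pauli so that the identity factor is $1$ rather than $2$, whereas you carry an explicit $1/\dims$ and cancel it against $2^{\nqubits}$ at the end; the arguments are otherwise identical.
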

\begin{proof}
    
 We will establish the following for the general single-qubit POVM observables. The MIC of single-qubit measurements is of the following form,
\begin{align*}
    \Choi_{\POVM^i} &= \sum_{o \in O} \frac{\vvec{M_o^i} \vadj{M_o^i}}{\Tr[M_o^i]} = \sum_{o \in O} \alpha_o^i \sum_{P, P^{'} \in \Sigma \cup \{\frac{\eye}{\sqrt{2}}\}} \beta_P^{i,o} \beta_{P^{'}}^{i,o} \; \vvec{P} \vadj{{P^{'}}} \\
    \Choi_{\POVM} &= \bigotimes_{i=1}^{\nqubits} \Choi_{\POVM_i}.
\end{align*}
 For the analysis, we will be utilizing the following set to describe observables for all  $S \subseteq [N]$, 
 \begin{align*}
 \Sigma_S := \left\{\sigma \in \Sigma \cup \{\frac{\eye}{\sqrt{2}}\}^{\otimes n}\mid \sigma_{s \notin S} = \frac{\eye}{\sqrt{2}}, \sigma_{s \in S} \in \Sigma \right\}. \\
 \end{align*}
 Summing over the inner product over weight $m$ normalized Pauli observables and assuming that we have $ \beta_{\eye}^{i,o} = 1$ for all $ (i, o)\in [N]\times O$, 
\begin{align*}
  \sum_{S \in \binom{[N]}{m}} \sum_{\sigma \in \Sigma_{S}} \vadj{\sigma} {\Choi}_\POVM \vvec{\sigma} &= \sum_{S \in \binom{[N]}{m}}  \sum_{\sigma \in \Sigma_S} \prod_{i=1}^N \sum_{o \in O} \alpha_o^i \sum_{P, P^{'} \in \Sigma \cup \{\frac{\eye}{\sqrt{2}}\}} \beta_P^{i,o} \beta_{P^{'}}^{i,o}  \vvdotprod{\sigma_i}{P} \vvdotprod{P^{'}}{\sigma_i}\\
   &= \sum_{S \in \binom{[N]}{m}}  \sum_{\sigma \in \Sigma_S} \prod_{i=1}^N \sum_{o \in O} \alpha_o^i {\beta_{\sigma_i}^{i,o}}^2 \\
   &= \sum_{S \in \binom{[N]}{m}}  \sum_{\sigma \in \Sigma_S} \prod_{i \in S} \sum_{o \in O} \alpha_o^i {\beta_{\sigma_i}^{i,o}}^2 \\
   &= \sum_{S \in \binom{[N]}{m}}  \prod_{i \in S} \sum_{\sigma_i \in \Sigma}  \sum_{o \in O} \alpha_o^i {\beta_{\sigma_i}^{i,o}}^2 \\
   &= \sum_{S \in \binom{[N]}{m}}  \prod_{i \in S}  \sum_{o \in O} \alpha_o^i \sum_{\sigma_i \in \Sigma} {\beta_{\sigma_i}^{i,o}}^2 \\
   &\leq \sum_{S \in \binom{[N]}{m}}  \prod_{i \in S} 1 = \binom{N}{m}.
\end{align*}
The concluding step is due to the convexity of the coefficients $\left\{\alpha_o\right\}_{o \in O}$ and the sub-normalization of the factors $\left\{\mid\beta_{\sigma}^{i,o}\mid\right\}_{\sigma \in \Sigma}$. Thus, we get the desired bound by summing over $m=\lceil9\nqubits/10\rceil$ to $\nqubits$,
\begin{align*}
    \sum_{i=1}^{\ell}\vadj{V_i} {\Choi}_\POVM \vvec{V_i} \leq \sum_{m=9N/10}^{N}{\nqubits\choose m}.
\end{align*}
\end{proof}
\subsection{Putting things together}
When discussing lower bounds for single-qubit measurements, a combinatorial argument naturally arises to bound the spectral norm of the MIC.
% \begin{lemma}
%     Let $\Choi_\POVM$ be the MIC for a single-qubit measurement $\POVM$. If, for $N \geq 10$ and $\ell \geq $ 
%     \begin{align*}
%     \sum_{i=1}^{\ell}\vadj{V_i} {\Choi}_\POVM \vvec{V_i} \leq \sum_{m=0}^{\lfloor N/10 \rfloor }{\nqubits\choose m} = \sum_{m=\lceil 9N/10 \rceil}^{N}{\nqubits\choose m},
%     \end{align*}
%     then, the copy-complexity: $n = \Omega(\frac{10^N}{\sqrt{N} \eps^2})$
% \end{lemma}
Thus, it suffices to establish the combinatorial bound to guarantee almost-tight copy-complexity bounds for Pauli measurements and general single-qubit measurements. In addition to the single-qubit generalization, we have made a harder instance construction by reducing $\ell = o(d^2)$. This allows the Pauli perturbations to have a higher minimum weight, making the state harder to learn. 
\paragraph{Valid lower bound construction.} We claim that all the Pauli observables of weight at least $\lceil 9N/10 \rceil$ are a valid lower bound construction as described by \cref{sec:hard-case}.
The number of perturbation directions is
\begin{align*}
    \ell=\sum_{m=0}^{\lfloor N/10 \rfloor}\binom{\nqubits}{m}3^{\nqubits-m}\ge 3^{\lceil 9N/10 \rceil}\sum_{m=0}^{\lfloor N/10 \rfloor}\binom{\nqubits}{m}.
\end{align*}
Using Stirling's approximation on the sum of binomial coefficients \cite[Lemma 16.19]{downey2012parameterized},
\begin{align*}
    \frac{1}{\sqrt{\nqubits}}2^{\nqubits h(\frac{1}{10})}\le \sum_{m=0}^{\lfloor N/10 \rfloor}{\nqubits\choose m} \leq 2^{N h(\frac{1}{10})}.
\end{align*}
Thus,
\begin{align*}
    \ell \geq \frac{3^{\lceil 9N/10 \rceil} 2^{\nqubits h(\frac{1}{10})}}{\sqrt{N}}  \geq \frac{2^{1.88 N}}{\sqrt{N}}   
    \geq 2^{-0.33N}2^{1.88N} \geq \dims^{\frac{3}{2}}.
\end{align*}
We have shown that $\ell$ is large enough for the construction in \cref{sec:hard-case}. We continue with $w = \lceil 9N/10 \rceil$.
\paragraph{Spectral upper bound.}
Combining \cref{thm:avg-MI-upper} and \cref{lem:avg-MI-lower},
\begin{align*}
    \frac{1}{100}&\le \frac{8 \ns c^2 \eps^2}{\ell^2} \sup_{P\in\Sigma^{\otimes\nqubits}}\sum_{i=1}^{\ell}\vadj{V_i} {\Choi}_P \vvec{V_i} +16 c^2\eps^2 \Pr[\ptb \notin \mathcal{G}] \\
    &=8nc^2\eps^2\Paren{\frac{\sum_{m=0}^{\lfloor N/10 \rfloor}{\nqubits\choose m}}{\ell^2}+2\Pr[\ptb \notin \mathcal{G}]} \\
    & \le 8nc^2\eps^2\Paren{\frac{2}{9^{\lceil 9N/10 \rceil} \sum_{m=0}^{\lfloor N/10 \rfloor} \binom{\nqubits}{m}}+2\Pr[z \notin \mathcal{G}]} \\
    & \le 16nc^2\eps^2\Paren{\frac{\sqrt{N}}{9^{0.9N} 2^{Nh(\frac{1}{10})}}+\Pr[\ptb \notin \mathcal{G}]} \\
    &= 16nc^2\eps^2\Paren{\frac{\sqrt{N}}{9^{0.9N} 2^{0.1N\log 10 + 0.9N\log 10/9}}+\Pr[\ptb \notin \mathcal{G}]} \\
    &= 16nc^2\eps^2\Paren{\frac{\sqrt{N}}{9^{0.9N} 10^{0.1N} (10/9)^{0.9N}}+\Pr[\ptb \notin \mathcal{G}]} \\
    &= 16nc^2\eps^2\Paren{\frac{\sqrt{N}}{10^{N}}+\Pr[\ptb \notin \mathcal{G}]}.
\end{align*}
Note that $\probaOf{\ptb\notin \mathcal{G}}\le \dims \exp(-\ell ^{1/4})$, which is much smaller than the first term for $\dims \ge 1024$. Thus, 
\begin{align*}
    \ns=\bigOmega{\frac{10^\nqubits}{\sqrt{\nqubits}\eps^2}}.
\end{align*}

\bibliographystyle{alpha}  % Bibliography style
\bibliography{ref}

\newcommand{\etalchar}[1]{$^{#1}$}
\begin{thebibliography}{GPRS{\etalchar{+}}21}

\bibitem[Aar20]{Aaronson20}
Scott Aaronson.
\newblock Shadow tomography of quantum states.
\newblock {\em {SIAM} J. Comput.}, 49(5), 2020.

\bibitem[ACL{\etalchar{+}}22]{ACLST22iiuic}
Jayadev Acharya, Clément~L. Canonne, Yuhan Liu, Ziteng Sun, and Himanshu Tyagi.
\newblock Interactive inference under information constraints.
\newblock {\em IEEE Transactions on Information Theory}, 68(1):502--516, 2022.

\bibitem[ACST23]{AcharyaCST23}
Jayadev Acharya, Cl\'{e}ment~L Canonne, Ziteng Sun, and Himanshu Tyagi.
\newblock Unified lower bounds for interactive high-dimensional estimation under information constraints.
\newblock In A.~Oh, T.~Naumann, A.~Globerson, K.~Saenko, M.~Hardt, and S.~Levine, editors, {\em Advances in Neural Information Processing Systems}, volume~36, pages 51133--51165. Curran Associates, Inc., 2023.

\bibitem[ACT20a]{acharya2020distributed}
Jayadev Acharya, Cl{\'e}ment~L Canonne, and Himanshu Tyagi.
\newblock Distributed signal detection under communication constraints.
\newblock In Jacob Abernethy and Shivani Agarwal, editors, {\em Proceedings of Thirty Third Conference on Learning Theory}, volume 125 of {\em Proceedings of Machine Learning Research}, pages 41--63. PMLR, 09--12 Jul 2020.

\bibitem[ACT20b]{AcharyaCT19}
Jayadev Acharya, Cl{\'{e}}ment~L. Canonne, and Himanshu Tyagi.
\newblock Inference under information constraints {I:} lower bounds from chi-square contraction.
\newblock {\em {IEEE} Trans. Inf. Theory}, 66(12):7835--7855, 2020.

\bibitem[ACT20c]{ACT:19:IT2}
Jayadev Acharya, Cl\'{e}ment~L. Canonne, and Himanshu Tyagi.
\newblock Inference under information constraints {II}: {C}ommunication constraints and shared randomness.
\newblock {\em IEEE Trans. Inform. Theory}, 66(12):7856--7877, 2020.
\newblock Available at abs/1905.08302.

\bibitem[ADLY25]{ADLY2025Paulinot}
Jayadev Acharya, Abhilash Dharmavarapu, Yuhan Liu, and Nengkun Yu.
\newblock Pauli measurements are not optimal for single-copy tomography.
\newblock In {\em Proceedings of the 57th Annual ACM Symposium on the Theory of Computation (STOC)}, volume abs/2502.18170, 2025.

\bibitem[BBvH23]{Bandeira_2023}
Afonso~S. Bandeira, March~T. Boedihardjo, and Ramon van Handel.
\newblock Matrix concentration inequalities and free probability.
\newblock {\em Inventiones mathematicae}, 234(1):419–487, June 2023.

\bibitem[BCL20]{BubeckC020}
S{\'{e}}bastien Bubeck, Sitan Chen, and Jerry Li.
\newblock Entanglement is necessary for optimal quantum property testing.
\newblock In Sandy Irani, editor, {\em 61st {IEEE} Annual Symposium on Foundations of Computer Science, {FOCS} 2020, Durham, NC, USA, November 16-19, 2020}, pages 692--703. {IEEE}, 2020.

\bibitem[BHO20]{barnes2019lower}
Leighton~Pate Barnes, Yanjun Han, and Ayfer Ozgur.
\newblock Lower bounds for learning distributions under communication constraints via fisher information.
\newblock {\em Journal of Machine Learning Research}, 21(236):1--30, 2020.

\bibitem[BLQ{\etalchar{+}}15]{BianLQZZRSBX2015}
Zhihao Bian, Jian Li, Hao Qin, Xiang Zhan, Rong Zhang, Barry~C. Sanders, and Peng Xue.
\newblock Realization of single-qubit positive-operator-valued measurement via a one-dimensional photonic quantum walk.
\newblock {\em Phys. Rev. Lett.}, 114:203602, May 2015.

\bibitem[BOW19]{BadescuO019}
Costin Badescu, Ryan O'Donnell, and John Wright.
\newblock Quantum state certification.
\newblock In Moses Charikar and Edith Cohen, editors, {\em Proceedings of the 51st Annual {ACM} {SIGACT} Symposium on Theory of Computing, {STOC} 2019, Phoenix, AZ, USA, June 23-26, 2019}, pages 503--514. {ACM}, 2019.

\bibitem[BQT{\etalchar{+}}15]{BentQTSL2015}
N.~Bent, H.~Qassim, A.~A. Tahir, D.~Sych, G.~Leuchs, L.~L. S\'anchez-Soto, E.~Karimi, and R.~W. Boyd.
\newblock Experimental realization of quantum tomography of photonic qudits via symmetric informationally complete positive operator-valued measures.
\newblock {\em Phys. Rev. X}, 5:041006, Oct 2015.

\bibitem[CBZG07]{ChenBJZG2007}
Ping-Xing Chen, J\'anos~A. Bergou, Shi-Yao Zhu, and Guang-Can Guo.
\newblock Ancilla dimensions needed to carry out positive-operator-valued measurement.
\newblock {\em Phys. Rev. A}, 76:060303, Dec 2007.

\bibitem[CCHL21]{ChenCH021}
Sitan Chen, Jordan Cotler, Hsin{-}Yuan Huang, and Jerry Li.
\newblock Exponential separations between learning with and without quantum memory.
\newblock In {\em 62nd {IEEE} Annual Symposium on Foundations of Computer Science, {FOCS} 2021, Denver, CO, USA, February 7-10, 2022}, pages 574--585. {IEEE}, 2021.

\bibitem[CDKS20]{Canonne2020testing}
Clément~L. Canonne, Ilias Diakonikolas, Daniel~M. Kane, and Alistair Stewart.
\newblock Testing bayesian networks.
\newblock {\em IEEE Transactions on Information Theory}, 66(5):3132--3170, 2020.

\bibitem[CGY24]{chen2024pauli}
Sitan Chen, Weiyuan Gong, and Qi~Ye.
\newblock Optimal tradeoffs for estimating pauli observables.
\newblock pages 1086--1105, 2024.

\bibitem[CHL{\etalchar{+}}23]{chen2023does}
Sitan Chen, Brice Huang, Jerry Li, Allen Liu, and Mark Sellke.
\newblock When does adaptivity help for quantum state learning?
\newblock In {\em 64th {IEEE} Annual Symposium on Foundations of Computer Science, {FOCS} 2023, Santa Cruz, CA, USA, November 6-9, 2023}, pages 391--404. {IEEE}, 2023.

\bibitem[CKW{\etalchar{+}}16]{cai2016optimal}
Tony Cai, Donggyu Kim, Yazhen Wang, Ming Yuan, and Harrison~H. Zhou.
\newblock {Optimal large-scale quantum state tomography with Pauli measurements}.
\newblock {\em The Annals of Statistics}, 44(2):682 -- 712, 2016.

\bibitem[CLHL22]{Chen0HL22}
Sitan Chen, Jerry Li, Brice Huang, and Allen Liu.
\newblock Tight bounds for quantum state certification with incoherent measurements.
\newblock In {\em 63rd {IEEE} Annual Symposium on Foundations of Computer Science, {FOCS} 2022, Denver, CO, USA, October 31 - November 3, 2022}, pages 1205--1213. {IEEE}, 2022.

\bibitem[CLL24]{Chen0L24memory}
Sitan Chen, Jerry Li, and Allen Liu.
\newblock An optimal tradeoff between entanglement and copy complexity for state tomography.
\newblock In Bojan Mohar, Igor Shinkar, and Ryan O'Donnell, editors, {\em Proceedings of the 56th Annual {ACM} Symposium on Theory of Computing, {STOC} 2024, Vancouver, BC, Canada, June 24-28, 2024}, pages 1331--1342. {ACM}, 2024.

\bibitem[CW20]{Cotler_2020}
Jordan Cotler and Frank Wilczek.
\newblock Quantum overlapping tomography.
\newblock {\em Physical Review Letters}, 124(10), Mar 2020.

\bibitem[DF12]{downey2012parameterized}
Rodney~G Downey and Michael~Ralph Fellows.
\newblock {\em Parameterized complexity}.
\newblock Springer Science \& Business Media, 2012.

\bibitem[dGK24]{de_Gois_2024}
Carlos de~Gois and Matthias Kleinmann.
\newblock User-friendly confidence regions for quantum state tomography.
\newblock {\em Physical Review A}, 109(6), June 2024.

\bibitem[DJW13]{duchi2013local}
John~C. Duchi, Michael~I. Jordan, and Martin~J. Wainwright.
\newblock Local privacy and statistical minimax rates.
\newblock In {\em 54th Annual {IEEE} Symposium on Foundations of Computer Science, {FOCS} 2013, 26-29 October, 2013, Berkeley, CA, {USA}}, pages 429--438. {IEEE} Computer Society, 2013.

\bibitem[EHF19]{evans2019scalable}
Tim~J. Evans, Robin Harper, and Steven~T. Flammia.
\newblock Scalable bayesian hamiltonian learning.
\newblock {\em arXiv:1912.07636}, 2019.

\bibitem[FGLE12]{Flammia_2012}
Steven~T Flammia, David Gross, Yi-Kai Liu, and Jens Eisert.
\newblock Quantum tomography via compressed sensing: error bounds, sample complexity and efficient estimators.
\newblock {\em New Journal of Physics}, 14(9):095022, Sep 2012.

\bibitem[FO24]{flamian2023tomography}
Steven~T. Flammia and Ryan O'Donnell.
\newblock Quantum chi-squared tomography and mutual information testing.
\newblock {\em Quantum}, 8:1381, 2024.

\bibitem[GKKT20]{guctua2020fast}
Madalin Gu{\c{t}}{\u{a}}, Jonas Kahn, Richard Kueng, and Joel~A Tropp.
\newblock Fast state tomography with optimal error bounds.
\newblock {\em Journal of Physics A: Mathematical and Theoretical}, 53(20):204001, 2020.

\bibitem[GLF{\etalchar{+}}10]{GrossLFBE10}
David Gross, Yi-Kai Liu, Steven~T. Flammia, Stephen Becker, and Jens Eisert.
\newblock Quantum state tomography via compressed sensing.
\newblock {\em Physical Review Letters}, 105(15), October 2010.

\bibitem[GPRS{\etalchar{+}}20]{Garc_a_P_rez_2020}
Guillermo García-Pérez, Matteo A.~C. Rossi, Boris Sokolov, Elsi-Mari Borrelli, and Sabrina Maniscalco.
\newblock Pairwise tomography networks for many-body quantum systems.
\newblock {\em Physical Review Research}, 2(2), Jun 2020.

\bibitem[GPRS{\etalchar{+}}21]{GarciaMRSTBPMTM2021}
Guillermo Garc\'{\i}a-P\'erez, Matteo~A.C. Rossi, Boris Sokolov, Francesco Tacchino, Panagiotis~Kl. Barkoutsos, Guglielmo Mazzola, Ivano Tavernelli, and Sabrina Maniscalco.
\newblock Learning to measure: Adaptive informationally complete generalized measurements for quantum algorithms.
\newblock {\em PRX Quantum}, 2:040342, Nov 2021.

\bibitem[HHJ{\etalchar{+}}17]{HaahHJWY17}
Jeongwan Haah, Aram~W. Harrow, Zhengfeng Ji, Xiaodi Wu, and Nengkun Yu.
\newblock Sample-optimal tomography of quantum states.
\newblock {\em {IEEE} Trans. Inf. Theory}, 63(9):5628--5641, 2017.

\bibitem[HKP20]{Huang_2020}
Hsin-Yuan Huang, Richard Kueng, and John Preskill.
\newblock Predicting many properties of a quantum system from very few measurements.
\newblock {\em Nature Physics}, Jun 2020.

\bibitem[KGNK25]{keenan2025randommatrixtheorypauli}
Nathan Keenan, John Goold, and Alex Nico-Katz.
\newblock A random matrix theory of pauli tomography, 2025.

\bibitem[KRT17]{KRT14}
Richard Kueng, Holger Rauhut, and Ulrich Terstiege.
\newblock Low rank matrix recovery from rank one measurements.
\newblock {\em Applied and Computational Harmonic Analysis}, 42(1):88--116, 2017.

\bibitem[LA24a]{liu2024restricted}
Yuhan Liu and Jayadev Acharya.
\newblock Quantum state testing with restricted measurements.
\newblock {\em CoRR}, abs/2408.17439, 2024.

\bibitem[LA24b]{LiuA24}
Yuhan Liu and Jayadev Acharya.
\newblock The role of randomness in quantum state certification with unentangled measurements.
\newblock In Shipra Agrawal and Aaron Roth, editors, {\em Proceedings of Thirty Seventh Conference on Learning Theory}, volume 247 of {\em Proceedings of Machine Learning Research}, pages 3523--3555. PMLR, 30 Jun--03 Jul 2024.

\bibitem[NC11]{Nielsen:2011:QCQ:1972505}
M.~A. Nielsen and I.~L. Chuang.
\newblock {\em Quantum Computation and Quantum Information: 10th Anniversary Edition}.
\newblock Cambridge University Press, 10th edition, 2011.

\bibitem[OW15]{ODonnellW15}
Ryan O'Donnell and John Wright.
\newblock Quantum spectrum testing.
\newblock In Rocco~A. Servedio and Ronitt Rubinfeld, editors, {\em Proceedings of the Forty-Seventh Annual {ACM} on Symposium on Theory of Computing, {STOC} 2015, Portland, OR, USA, June 14-17, 2015}, pages 529--538. {ACM}, 2015.

\bibitem[OW16]{ODonnellW16}
Ryan O'Donnell and John Wright.
\newblock Efficient quantum tomography.
\newblock In Daniel Wichs and Yishay Mansour, editors, {\em Proceedings of the 48th Annual {ACM} {SIGACT} Symposium on Theory of Computing, {STOC} 2016, Cambridge, MA, USA, June 18-21, 2016}, pages 899--912. {ACM}, 2016.

\bibitem[OW17]{ODonnellW17}
Ryan O'Donnell and John Wright.
\newblock Efficient quantum tomography {II}.
\newblock In Hamed Hatami, Pierre McKenzie, and Valerie King, editors, {\em Proceedings of the 49th Annual {ACM} {SIGACT} Symposium on Theory of Computing, {STOC} 2017, Montreal, QC, Canada, June 19-23, 2017}, pages 962--974. {ACM}, 2017.

\bibitem[Pan08]{Paninski08}
Liam Paninski.
\newblock A coincidence-based test for uniformity given very sparsely sampled discrete data.
\newblock {\em {IEEE} Trans. Inf. Theory}, 54(10):4750--4755, 2008.

\bibitem[Tab12]{Tabia2012}
Gelo Noel~M Tabia.
\newblock Experimental scheme for qubit and qutrit symmetric informationally complete positive operator-valued measurements using multiport devices.
\newblock {\em Physical Review A—Atomic, Molecular, and Optical Physics}, 86(6):062107, 2012.

\bibitem[Tro11]{Tropp_2011}
Joel~A. Tropp.
\newblock User-friendly tail bounds for sums of random matrices.
\newblock {\em Foundations of Computational Mathematics}, 12(4):389–434, August 2011.

\bibitem[Yu20]{Yu2020Pauli}
Nengkun Yu.
\newblock Sample efficient tomography via {P}auli measurements.
\newblock {\em CoRR}, abs/2009.04610, 2020.

\bibitem[Yu23]{Yu2023almost}
Nengkun Yu.
\newblock Almost tight sample complexity analysis of quantum identity testing by pauli measurements.
\newblock {\em IEEE Transactions on Information Theory}, 69(8):5060--5068, 2023.

\bibitem[Yue23]{Yuen_2023}
Henry Yuen.
\newblock An improved sample complexity lower bound for (fidelity) quantum state tomography.
\newblock {\em Quantum}, 7:890, January 2023.

\bibitem[YW23]{yu2023learningmarginalssuffices}
Nengkun Yu and Tzu-Chieh Wei.
\newblock Learning marginals suffices!
\newblock {\em CoRR}, abs/2303.08938, 2023.

\end{thebibliography}
\end{document}